\documentclass[aps,pra,11pt,onecolumn,superscriptaddress,floatfix,nofootinbib,showpacs,longbibliography]{revtex4-2}
\usepackage{comment}
\usepackage{amsmath,amssymb,amsthm,bm,amsfonts,mathrsfs,bbm}
\usepackage[colorlinks=true, citecolor=blue, urlcolor=blue]{hyperref}
\usepackage[caption = false]{subfig}
\usepackage{times}
\usepackage{xcolor,colortbl}
\usepackage{ragged2e}
\usepackage{tikz}
\usepackage{scalerel}
\usetikzlibrary{calc}
\usetikzlibrary {arrows.meta}
\usepackage{leftindex}
\usepackage{braket}

\usepackage[utf8]{inputenc}  
\usepackage[T1]{fontenc}     
\usepackage[british]{babel}  
\usepackage[sc,osf]{mathpazo}\linespread{1}  
\usepackage[scaled=0.86]{berasans}  
\usepackage[colorlinks=true, citecolor=blue, urlcolor=blue]{hyperref}
\makeatletter
\newcommand{\setword}[2]{%
  \phantomsection
  #1\def\@currentlabel{\unexpanded{#1}}\label{#2}%
}
\makeatother
\usepackage{graphicx} 
\usepackage[babel]{microtype}
\usepackage{blkarray}
\usepackage{amsmath}

\usepackage{amsmath,amssymb,amsthm,bm,amsfonts,mathrsfs,bbm} 
\usepackage{xspace}  
\usepackage{pgf,tikz}
\usepackage{xcolor}
\usepackage{multirow}
\usepackage{array}
\usepackage{bigstrut}
\usepackage{braket}
\usepackage{color}
\usepackage{natbib}

\usepackage{multirow}
\usepackage{mathtools}
\usepackage{float}
\usepackage[caption = false]{subfig}
\usepackage{xcolor,colortbl}
\usepackage{color}

\newcommand{\be}{\begin{equation}}
\newcommand{\ee}{\end{equation}}
\newcommand{\ba}{\begin{eqnarray}}
\newcommand{\ea}{\end{eqnarray}}

\newtheorem{theorem}{Theorem}

\newtheorem{proposition}{Proposition}



\usepackage{diagbox}
\usepackage{multirow}
\usepackage{tabularx}
\usepackage{comment}

\setlength{\tabcolsep}{0.2pt}

\usepackage{slashbox}
\usepackage{tikz}
\usetikzlibrary{decorations.markings}
\usepackage{pgfplots}
\usepackage{verbatim}
\usepgfplotslibrary{fillbetween}
\usetikzlibrary{patterns.meta}
\usetikzlibrary{patterns}
\usetikzlibrary{arrows}

\usepackage{blindtext}
\usepackage{tcolorbox}

\def\MarkLt{4pt}
\def\MarkSep{2pt}

\tikzset{
  TwoMarks/.style={
    postaction={decorate,
      decoration={
        markings,
        mark=at position #1 with
          {
              \begin{scope}[xslant=0.2]
              \draw[line width=\MarkSep,white,-] (0pt,-\MarkLt) -- (0pt,\MarkLt) ;
              \draw[-] (-0.5*\MarkSep,-\MarkLt) -- (-0.5*\MarkSep,\MarkLt) ;
              \draw[-] (0.5*\MarkSep,-\MarkLt) -- (0.5*\MarkSep,\MarkLt) ;
              \end{scope}
          }
       }
    }
  },
  TwoMarks/.default={0.5},
}





\def\>{\rangle}
\def\<{\langle}






\usepackage{stmaryrd}


\usepackage{centernot}
\usepackage{subfig}

\usepackage{diagbox}
\usepackage{multirow}
\usepackage{tabularx}
\usepackage{circuitikz}

\begin{document}

\title{Nonlocality-Assisted Enhancement of Error-Free Communication \\in Noisy Classical Channels}

\author{Kunika Agarwal}
\affiliation{Department of Physics of Complex Systems, S. N. Bose National Center for Basic Sciences, Block JD, Sector III, Salt Lake, Kolkata 700106, India.}

\author{Sahil Gopalkrishna Naik}
\affiliation{Department of Physics of Complex Systems, S. N. Bose National Center for Basic Sciences, Block JD, Sector III, Salt Lake, Kolkata 700106, India.}

\author{Ananya Chakraborty}
\affiliation{Department of Physics of Complex Systems, S. N. Bose National Center for Basic Sciences, Block JD, Sector III, Salt Lake, Kolkata 700106, India.}

\author{Samrat Sen}
\affiliation{Department of Physics of Complex Systems, S. N. Bose National Center for Basic Sciences, Block JD, Sector III, Salt Lake, Kolkata 700106, India.}

\author{Pratik Ghosal}
\affiliation{Department of Physics of Complex Systems, S. N. Bose National Center for Basic Sciences, Block JD, Sector III, Salt Lake, Kolkata 700106, India.}

\author{Biswajit Paul}
\affiliation{Department Of Mathematics, Balagarh Bijoy Krishna Mahavidyalaya, Balagarh, Hooghly-712501, West Bengal, India.}

\author{Manik Banik}
\affiliation{Department of Physics of Complex Systems, S. N. Bose National Center for Basic Sciences, Block JD, Sector III, Salt Lake, Kolkata 700106, India.}

\author{Ram Krishna Patra}
\affiliation{Department of Physics of Complex Systems, S. N. Bose National Center for Basic Sciences, Block JD, Sector III, Salt Lake, Kolkata 700106, India.}

\begin{abstract}
\noindent The zero-error capacity of a noisy classical channel quantifies its ability to transmit information with absolute certainty, i.e., without any error. Unlike Shannon's standard channel capacity, which remains unaffected by pre-shared no-signaling correlations, zero-error capacity can be enhanced through nonlocal correlations. In this work, we first investigate zero-error communication utility of nonlocal correlations arising in the \(2\)-\(2\)-\(m\) Bell scenario, where two parties have two inputs and \(m\) possible outcomes per input. For all \(m \geq 2\), we construct examples of noisy classical channels with zero zero-error capacity that, when assisted by extremal \(2\)-\(2\)-\(m\) nonlocal correlations, can transmit one bit of information. While nonlocal correlations arising from quantum entangled states cannot achieve a positive zero-error capacity for these channels, they can significantly enhance the success probability of one-shot transmission of  classical messages. Extending this analysis to the \(2\)-\(m\)-\(2\) Bell scenario, we identify channels with zero zero-error capacity that can nonetheless perfectly transmit \(\log m\) bits of information when assisted by corresponding extremal nonlocal correlations. Our findings underscore the versatile utility of Bell nonlocal correlations in achieving zero-error communication.  
\end{abstract}

\maketitle
\section{Introduction}
\noindent The nonlocal nature of quantum mechanics, established by the seminal Bell’s theorem \cite{Bell1964}, marks a profound departure of quantum theory from the deeply ingrained classical notion of `local realism' \cite{Bell1966, Mermin1993, Brunner2014}. Empirical demonstrations of quantum nonlocality typically rely on violations of Bell-type inequalities, which analyze outcome statistics from incompatible local measurements on entangled systems \cite{Freedman1972, Aspect1981, Aspect1982(1), Aspect1982(2), Zukowski1993, Weihs1998, Hensen2015, BIG2018, Storz2023}. In the bipartite setting, the  Clauser-Horne-Shimony-Holt (CHSH) inequality remains one of the most widely studied tests, involving two binary measurements per party \cite{Clauser1969}. Subsequent works showed that nonlocality exhibited by quantum mechanics, as in the CHSH test, is limited compared to broader theories that satisfy the no-signaling (NS) principle \cite{Cirelson1980, Popescu1994, Popescu2014}. This insight has fostered a research area merging quantum information with fundamental principles to explore deeper aspects of physical theory \cite{Brassard2006, vanDam2012, Buhrman2010, Pawlowski2009, Navascus2009, Fritz2013, Banik2013,Naik2022,Patra2023,Naik2023,Ghosh2024}. Beyond two-input and two-output case, higher input-output nonlocal correlations reveal additional exotic behaviors (see \cite{Cope2019} and references therein), prompting critical inquiries into the mathematical structure underlying quantum mechanics \cite{Palazuelos2016}. Recent advances have resolved longstanding mathematical challenges \cite{Slofstra2019(1), Slofstra2019(2), Ji2021}, shedding new light on foundational aspects of physical theories \cite{Cabello2023}.\\

\noindent Beyond its foundational significance, philosophical appeal, and mathematical intricacy, study of Bell nonlocality also offers substantial practical value \cite{Brunner2014}. Nonlocal correlations have applications in secure cryptographic key generation, communication complexity, device-independent randomness certification, dimension witnessing, Bayesian game theory, and zero-error communication theory \cite{Ekert1991, Barrett2005, Pironio2010, Chaturvedi2015, Mukherjee2015, Scarani2012, Brunner2013, Pappa2015, Roy2016, Banik2019, Prevedel2011, Cubitt2010, Cubitt2011, Yadavalli2022, Alimuddin2023}. In this article we investigate how nonlocal correlations can enhance the zero-error communication capacity of noisy classical channels. Origin of zero-error information theory dates back to Shannon’s seminal 1956 work \cite{Shannon1956}. For a classical channel \(\mathcal{N}\), its single-shot zero-error capacity \(\mathrm{C}^\circ(\mathcal{N})\) quantifies the maximum rate of error-free message transmission in a single channel use. Notably, pre-shared entanglement and more general NS correlations can extend assisted zero-error capacities, \(\mathrm{C}^\circ_E\) and \(\mathrm{C}^\circ_{NS}\), beyond the unassisted zero-error capacity \(\mathrm{C}^\circ\), for certain noisy classical channels \cite{Cubitt2010, Cubitt2011, Yadavalli2022}. Here, we start by examining the \(2\)-\(2\)-\(m\) Bell scenario, where Alice and Bob each have two inputs, yielding \(m\) possible outcomes per input. For each \(m \geq 2\), we construct a classical channel \(\mathcal{N}_m\) with zero unassisted zero-error capacity that can still transmit one bit of information without any error when assisted by extremal \(2\)-\(2\)-\(m\) nonlocal correlations. While the nonlocal correlation obtained from quantum entangled states cannot make the zero-error capacity of a classical channel positive if the unassisted capacity is zero \cite{Cubitt2011}, we demonstrate that entanglement can however enhance the probability of successful transmission of one-bit message in a single usage of these channels. We further explore the \(2\)-\(m\)-\(2\) Bell scenario, in which each party has \(m\) inputs and two possible outputs per input, identifying noisy classical channels with zero-error capacity that nonetheless allows perfect $\log m$ bits communication when assisted by extremal nonlocal correlations of \(2\)-\(m\)-\(2\) settings.

\section{Correlation experiment}
\noindent A Bell-type correlation experiment involves multiple causally separated parties performing local measurements on respective subsystems of a composite quantum system. In the bipartite case, let \( x \in \mathcal{X} \) and \( y \in \mathcal{Y} \) represent the measurement choices of Alice and Bob, with outcomes \( a \in \mathcal{A} \) and \( b \in \mathcal{B} \), respectively. Repeated experiments yield a correlation or 'behavior' denoted \( P \equiv \{p(a,b|x,y)\} \), where \( p(a,b|x,y) \) is the probability of outcomes \( a \) and \( b \) given inputs \( x \) and \( y \). A correlation is called classical (or `local realistic') if it admits a decomposition \( p(a,b|x,y) = \int d\lambda \, p(\lambda) p(a|x,\lambda) p(b|y,\lambda) \), where \( \lambda \) is a shared classical random variable with distribution \( p(\lambda) \), independent of \( x \) and \( y \) \cite{Bell1964, Bell1966, Mermin1993, Brunner2014}. Quantum correlations, on the other hand, allow for representations of the form \( p(a,b|x,y) = \operatorname{Tr}[(\pi^a_x \otimes \pi^b_y) \rho] \), where \( \rho \in \mathcal{D}(\mathcal{H}_A \otimes \mathcal{H}_B) \) is a density matrix on the joint Hilbert space \( \mathcal{H}_A \otimes \mathcal{H}_B \), and \( \pi^a_x \in \mathcal{P}(\mathcal{H}_A) \) and \( \pi^b_y \in \mathcal{P}(\mathcal{H}_B) \) are positive operators satisfying \( \sum_a \pi^a_x = \mathbf{I}_A~\forall~x \) and \( \sum_b \pi^b_y = \mathbf{I}_B~\forall~y \), with \( \mathbf{I} \) denoting the identity operator. For additional characterizations of quantum correlations, see \cite{Slofstra2019(1), Slofstra2019(2), Ji2021}. A no-signaling (NS) correlation requires only that Alice’s marginal probabilities remain independent of Bob’s input and vice versa, i.e., \( p(a|x,y) = p(a|x,y') \) and \( p(b|x,y) = p(b|x',y) \). The sets of local correlations (\( \mathcal{L} \)) and NS correlations (\( \mathcal{NS} \)) form convex polytopes in some \( \mathbb{R}^n \), while the quantum correlations (\( \mathcal{Q} \)) form a convex set strictly between them, i.e., \( \mathcal{L} \subsetneq \mathcal{Q} \subsetneq \mathcal{NS} \). Both \( \mathcal{L} \) and \( \mathcal{NS} \) are characterized by extreme points, while \( \mathcal{Q} \) is generally not topologically closed \cite{Slofstra2019(2)}. \\

\noindent The \( 2 \)-\( 2 \)-\( m \) Bell scenario, where \( |\mathcal{X}| = |\mathcal{Y}| = 2 \) and \( |\mathcal{A}| = |\mathcal{B}| = m \geq 2 \), considers inputs \( x, y \in \{0,1\} \) and outputs \( a, b \in \{0,1,\dots,m-1\} \). The nonlocal extreme points \( P^{(2,m)}_{k} \equiv \{p_k(a,b|x,y)\} \) of the \( \mathcal{NS} \) set, up to relabeling of inputs and outputs, are given by \cite{Barrett2005(1)}:
\begin{align}
p_k(a,b|x,y) \equiv \begin{cases}
\frac{1}{k} & \text{if } (b - a) \mod k = x \cdot y, \\
0 & \text{otherwise},
\end{cases}
\end{align}
where \( a,b \in \{0,1,\ldots,k-1\}~\&~k \in \{2, \dots, m\} \) and superscript \( (u, v) \) denotes the configuration \( |\mathcal{X}| = |\mathcal{Y}| = u \) and \( |\mathcal{A}| = |\mathcal{B}| = v \). For \( k < m \), these correlations can be derived from sub-scenarios of the \( 2 \)-\( 2 \)-\( m \) Bell setting, where certain outputs are excluded. In the following, we focus on the case \( k = m \), representing the full-output extremal correlations, denoted simply as \( P_m \).

\section{Zero-error communication}
\noindent Given inputs \( i \in \mathcal{I} \) yielding outputs \( o \in \mathcal{O} \), a classical channel \(\mathcal{N}\) is defined by a stochastic matrix \(\mathbb{N} = \{p(o|i)~|~p(o|i) \geq 0, \, \sum_o p(o|i) = 1\}\). Shannon’s seminal channel coding theorem quantifies the channel's asymptotic efficiency through its capacity, defined as the maximum input-output mutual information across input distributions \cite{Shannon1948}. The reverse Shannon theorem confirms that pre-shared NS correlations, whether classical or beyond, do not increase this capacity \cite{Bennett2002, Winter2002, Bennett2014}. However, in many applications where error-free transmission is essential, and channel usage is limited, the relevant measure of channel utility shifts to the zero-error capacity, introduced by Shannon in a later seminal work \cite{Shannon1956}. Computing the zero-error capacity of classical channels involves a combinatorial graph-theoretic approach \cite{Korner1998}.\\

\noindent For a channel \(\mathcal{N}\), we denote its confusability graph \(\mathbb{G}(\mathcal{N})\equiv (\mathbb{V}, \mathbb{E})\), where vertices in \(\mathbb{V}\) represent inputs and edges in \(\mathbb{E}\) connect inputs that share at least one common output. The independence number \(\alpha(\mathbb{G})\) of a graph \(\mathbb{G}\) is the largest set of vertices that are pairwise non-adjacent. Without shared correlations, the single-shot zero-error capacity, \(\mathrm{C}^\circ(\mathcal{N})\), of a channel \(\mathcal{N}\) is given by \(\mathrm{C}^\circ(\mathcal{N}) = \log \alpha(\mathbb{G}(\mathcal{N}))\) bits. For two channels, the single-shot zero-error capacity of their combined use depends on the independence number of the strong product of their respective confusability graphs \cite{Self1}. While shared classical correlation cannot increase zero-error, it can be enhanced when assisted by quantum entanglement or more general NS correlations zero-error capacity, respectively denoted by \(\mathrm{C}^\circ_E\) and \(\mathrm{C}^\circ_{NS}\), when assisted by quantum entanglement or more general NS correlations \cite{Cubitt2010, Cubitt2011}. Nevertheless, if \(\mathrm{C}^\circ(\mathcal{N}) = 0\), entanglement cannot make \(\mathrm{C}^\circ_E(\mathcal{N})\) positive, while \(\mathrm{C}^\circ_{NS}(\mathcal{N})\) may still be nonzero.

\section{Results}
\noindent We start by defining a classical channel \(\mathcal{N}_3\) with inputs and outputs respectively denoted by the tuples \((i_1, i_2)\in\{0,1\}\times\{0,1,2\}\) and \((o_1, o_2)\in\{1,2,3,4\}\times\{0,1,2\}\) (see Fig.\ref{fig1}).
\begin{figure}[h!]
\centering
\begin{tikzpicture}
\draw[fill=blue!10](-2,0)--(-2,2)--(-1,2)--(-1,0)--(-2,0);
\draw[thin,black](-3,1.5)--(-2,1.5);
\draw[thin,black](-3,.6)--(-2,.6);
\draw[thin,black](-1,1.5)--(0,1.5);
\draw[thin,black](-1,.6)--(0,.6);
\draw[blue, fill=black] (-3,1.5) circle (.1);
\draw[blue, fill=black] (-3,.6) circle (.1);
\draw[blue, fill=black] (0,1.5) circle (.1);
\draw[blue, fill=black] (0,.6) circle (.1);
\node[] at (-1.5,1){\large{$\mathcal{N}_3$}};
\node[] at (-4.3,1.5){$\{0,1\}\ni i_1$};
\node[] at (-4.15,.6){$\{0,1,2\}\ni i_2$};
\node[] at (1.3,1.5){$o_1\in\{1,2,3,4\}$};
\node[] at (1.5,.6){$o_2\in\{0,1,2\}$};
\end{tikzpicture}
\caption{Diagrammatical representation of the channel $\mathcal{N}_3$.}\label{fig1}
\vspace{-.5cm}
\end{figure}\\
Outcome $o_2$ is conditioned on outcome $o_1$ as follows: \\
\begin{minipage}{.42\linewidth}
\begin{align*}
o_2:=\begin{cases}
i_1,~&\text{if } o_1 = 1;\\
i_2,~&\text{if } o_1 = 2;\\
i_1 \oplus_3 i_2,~&\text{if } o_1 = 3; \\
i_1 \oplus_3 \Pi(i_2),~&\text{if } o_1 = 4,
\end{cases}
\end{align*}
\end{minipage}
\begin{minipage}{.42\linewidth}
\begin{align}
~~~~\Pi(0):=0;\nonumber\\
~~~~\Pi(1):=2;\label{n3}\\
~~~~\Pi(2):=1.\nonumber
\end{align}
\end{minipage}\\\\
Here, \(\oplus_3\) denotes modulo $3$ addition.The channel exhibits two key symmetries: ($S_1$) each input tuple results in four different outputs, and ($S_2$) every pair of input tuples shares exactly one common output tuple. This implies that the confusability graph \(\mathbb{G}(\mathcal{N}_3)\) for \(\mathcal{N}_3\) is the complete graph \(K_6\). Accordingly, the unassisted zero-error capacity $\mathrm{C}^\circ(\mathcal{N}_3)$ as well as shared-randomness assisted zero-error capacity $\mathrm{C}^\circ_{SR}(\mathcal{N}_3)$ become zero. Our next theorem analyzes zero-error capacity of $\mathcal{N}_3$ when it is assisted with preshared $NS$ correlation.
\begin{table}[h!]
\begin{tabular}
{|c||c|c|c||c|c|c|}
\hline
$~~\text{out.} \backslash \text{in.}~~$ & ~~$(0,0)$~~ & ~~$(0,1)$~~ & ~~$(0,2)$~~ & ~~$(1,0)$~~ & ~~$(1,1)$~~ & ~~$(1,2)$~~ \\ \hline\hline
$\textcolor{purple}{(1,0)}$ & $\textcolor{purple}{1/4}$ & $\textcolor{purple}{1/4}$ & $\textcolor{purple}{1/4}$ & $0$ & $0$ &$ 0$ \\\hline
$\textcolor{blue}{(1,1)}$ & $0$ & $0$ & $0$ & $\textcolor{blue}{1/4}$ & $\textcolor{blue}{1/4}$ & $\textcolor{blue}{1/4}$ \\\hline
$(2,0)$ & $1/4$ & $0$ & $0$ & $1/4$ & $0$ & $0$ \\\hline
$(2,1)$ & $0$ & $1/4$ & $0$ & $0$ & $1/4$ & $0$ \\\hline
$(2,2)$ & $0$ & $0$ & $1/4$ & $0$ & $0$ & $1/4$ \\\hline
$(3,0)$ & $1/4$ & $0$ & $0$ & $0$ & $0$ & $1/4$ \\\hline
$(3,1)$ & $0$ & $1/4$ & $0$ & $1/4$ & $0$ & $0$ \\\hline
$(3,2)$ & $0$ & $0$ & $1/4$ & $0$ & $1/4$ & $0$ \\\hline
$(4,0)$ & $1/4$ & $0$ & $0$ & $0$ & $1/4$ & $0$ \\\hline
$(4,1)$ & $0$ & $0$ & $1/4$ & $1/4$ & $0$ & $0$ \\\hline
$(4,2)$ & $0$ & $1/4$ & $0$ & $0$ & $0$ & $1/4$ \\\hline
\end{tabular}
\caption{Stochastic matrix \(\mathbb{N}_3\equiv\{\text{Pr}(o_1, o_2 | i_1, i_2)\}\). The entries specify the outcome probabilities for each input. Here, inputs are deprecated in columns resulting addition of entries for each column to be $1$.}\label{tab1}
\end{table}
\begin{figure}[]
\centering
\begin{tikzpicture}
\draw[fill=blue!10](0,0)--(0,1.5)--(.75,1.5)--(.75,0)--(0,0);
\draw[thin,black](-1.5,1.2)--(0,1.2);
\draw[thin,black](-1.5,.3)--(0,0.3);
\draw[thin,black](.75,1.2)--(2.25,1.2);
\draw[thin,black](.75,.3)--(2.25,0.3);
\draw[blue, fill=black] (-1.5,1.2) circle (.08);
\draw[blue, fill=black] (-1.5,.3) circle (.08);
\draw[blue, fill=black] (2.25,1.2) circle (.08);
\draw[blue, fill=black] (2.25,0.3) circle (.08);
\node[] at (.4,.7){\large{$\mathcal{N}_3$}};
\node[] at (-1.6,1.5){$i_1=\gamma$};
\node[] at (-1.6,.6){$i_2=a$};
\node[] at (2.6,1.2){$o_1$};
\node[] at (2.6,.3){$o_2$};

\draw[fill=green!10](-1.75,-2)--(-1.75,-1)--(2.5,-1)--(2.5,-2)--(-1.75,-2);
\draw[thin,black](-1.25,-.2)--(-1.25,-1);
\draw[thin,black](-1.25,-2)--(-1.25,-2.8);
\draw[thin,black](2,-.2)--(2,-1);
\draw[thin,black](2,-2)--(2,-2.8);
\node[] at (-1.25,-.6)[rotate=180,inner sep=0pt] {\tikz\draw[-triangle 90](0,0) ;};
\node[] at (-1.25,-2.6)[rotate=180,inner sep=0pt] {\tikz\draw[-triangle 90](0,0) ;};
\node[] at (2,-.6)[rotate=180,inner sep=0pt] {\tikz\draw[-triangle 90](0,0) ;};
\node[] at (2,-2.6)[rotate=180,inner sep=0pt] {\tikz\draw[-triangle 90](0,0) ;};
\draw [thick,dotted] (-1.3,-2.8) .. controls (-2,-3) and (-3,-2) .. (-1.6,.18);

\node[] at (-2.26,-1.3)[rotate=-5,inner sep=0pt] {\tikz\draw[-triangle 90](0,0) ;};
\node[] at (.4,-1.5){\large{$P_3$}};
\node[] at (-.6,-.6){$x=\gamma$};
\node[] at (1.2,-.6){$y(o_1,o_2)$};
\node[] at (-.8,-2.4){$a$};
\node[] at (1.5,-2.4){$b$};

\draw[thin,black, dashed](-2.5,-3)--(-2.5,2)--(3.2,2)--(3.2,-3)--(-2.5,-3);
\draw[thin,black](-3,1.5)--(-2.5,1.5);
\draw[thin,black](3.2,-2.3)--(3.7,-2.3)--(3.7,-2.6);
\node[] at (-2.7,1.5)[rotate=270,inner sep=0pt] {\tikz\draw[-triangle 90](0,0) ;};
\node[] at (3.7,-2.7)[rotate=180,inner sep=0pt] {\tikz\draw[-triangle 90](0,0) ;};
\node[] at (-3.2,1.5){\textcolor[rgb]{0,0,1}{$\gamma$}};
\node[] at (4.1,-2.9){\textcolor[rgb]{0,.3,.8}{$\tilde{\gamma}(o_1,o_2,b)$}};
\node[] at (-3.,1.9){\textcolor[rgb]{0,0,1}{Alice}};
\node[] at (4,-3.2){\textcolor[rgb]{0,.3,.8}{Bob}};
\end{tikzpicture}    
\caption{Given the resources $\mathcal{N}_3~\&~P_3$, the protocol to transmit one-bit of message from Alice to Bob is depicted here.}\label{fig2}
\end{figure}
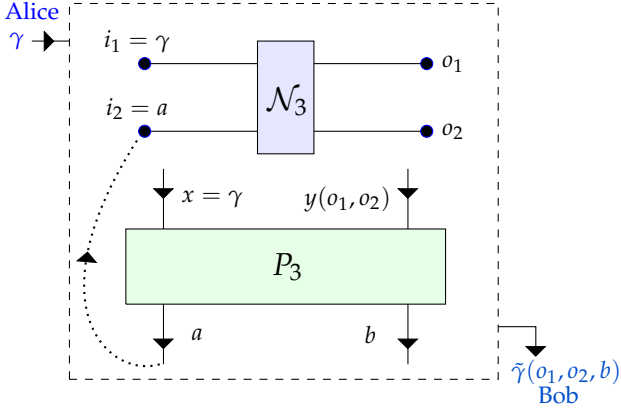
\begin{theorem}\label{theo1}
With a single use of the channel \(\mathcal{N}_3\), one bit of information can be perfectly transmitted from the sender to the receiver when assisted by the NS correlation \(P_3\), resulting in \(\mathrm{C}^\circ_{NS}(\mathcal{N}_3) \geq 1\).
\end{theorem}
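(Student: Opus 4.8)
The plan is to exhibit an explicit protocol realizing the wiring depicted in Fig.~\ref{fig2}: Alice holds the one-bit message $\gamma\in\{0,1\}$, both parties share the NS box $P_3$, and Alice then uses $\mathcal{N}_3$ once; from the channel output $(o_1,o_2)$ together with his share of the box, Bob must reconstruct $\gamma$ with certainty. Concretely, I would let Alice feed her message $\gamma$ as her box input $x=\gamma$ and obtain outcome $a\in\{0,1,2\}$; she then sends $(i_1,i_2)=(\gamma,a)$ into the channel. Bob receives $(o_1,o_2)$. The key observation is that for each value of $o_1\in\{1,2,3,4\}$ the channel's conditioning rule tells Bob a known \emph{linear} relation over $\Z_3$ (or a permuted version thereof) between $\gamma$, $a$, and $o_2$: e.g.\ if $o_1=2$ then $o_2=a$; if $o_1=3$ then $o_2=\gamma\oplus_3 a$; if $o_1=4$ then $o_2=\gamma\oplus_3\Pi(a)$; and if $o_1=1$ then $o_2=\gamma$ already reveals the bit directly (here one must check $\gamma\in\{0,1\}\subset\Z_3$ so $o_2\in\{0,1\}$ unambiguously encodes it). So, except in the trivial branch $o_1=1$, Bob learns a noisy version of $a$ — specifically $a$ shifted by a quantity that depends on $\gamma$ — and he needs the box to remove that shift. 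I would have Bob choose his box input $y$ as a function of $(o_1,o_2)$ so that, using the $P_3$ constraint $(b-a)\bmod 3 = x\cdot y$, his outcome $b$ together with $(o_1,o_2)$ pins down $\gamma$.

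The heart of the argument is the bookkeeping in the branches $o_1=3,4$, where both $x\cdot y$ and the channel rule enter. Since $x=\gamma$, the box gives $b = a + \gamma\, y \pmod 3$. In the $o_1=3$ branch Bob knows $o_2 = a+\gamma$, so $b - o_2 = \gamma(y-1)\pmod 3$; choosing $y=2$ makes $b-o_2=\gamma\pmod 3$, and since $\gamma\in\{0,1\}$ this determines $\gamma$. In the $o_1=4$ branch Bob knows $o_2=\gamma+\Pi(a)$; here the permutation $\Pi$ (which fixes $0$ and swaps $1,2$, i.e.\ $\Pi(a)=-a\bmod 3$ for $a\in\{1,2\}$, and also $\Pi(0)=0=-0$) means $\Pi(a)\equiv -a\pmod 3$ for all $a\in\{0,1,2\}$, so $o_2 = \gamma - a\pmod 3$, giving $b+o_2 = a + \gamma y + \gamma - a = \gamma(y+1)\pmod 3$; choosing $y=0$ yields $b+o_2=\gamma\pmod 3$, again determining $\gamma$ because $\gamma\in\{0,1\}$. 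For $o_1=2$ Bob has $o_2=a$ but no information about $\gamma$ from the channel, so he must extract it from the box alone: choose $y=1$, then $b = a+\gamma = o_2+\gamma\pmod 3$, so $b-o_2=\gamma$. For $o_1=1$ Bob reads $\gamma=o_2$ directly. In every branch Bob computes $\tilde\gamma(o_1,o_2,b)=\gamma$ deterministically, so the transmission is error-free.

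The remaining point is to confirm that this wiring is internally consistent — that Bob's input $y$ is genuinely a function only of $(o_1,o_2)$ (it is: $y$ depends on $o_1$ alone in the recipe above) and that Alice's input $x$ does not depend on anything Bob does (it is just $\gamma$), so the protocol respects the causal/no-signaling structure of the resources. I would then note that since Bob recovers an arbitrary bit $\gamma$ perfectly, two input messages are perfectly distinguishable with one channel use assisted by $P_3$, hence $\mathrm{C}^\circ_{NS}(\mathcal{N}_3)\ge \log 2 = 1$, which is the claim. I expect the only real obstacle to be getting the modular arithmetic and the role of $\Pi$ exactly right in the $o_1=3,4$ branches — in particular using the identity $\Pi(a)\equiv -a\pmod 3$ and the restriction $\gamma\in\{0,1\}$ to collapse a $\Z_3$-valued equation down to a faithful bit — and verifying that the four branches together with the three possible values of $a$ never produce a collision; everything else is routine substitution into the definitions of $\mathcal{N}_3$ and $P_3$.
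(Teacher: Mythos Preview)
Your overall strategy---feeding $x=\gamma$ and $(i_1,i_2)=(\gamma,a)$, then branching on $o_1$ and choosing $y$ accordingly---coincides with the paper's protocol, and your analysis of the cases $o_1\in\{1,2,4\}$ is correct and essentially identical to theirs (your decodings $b-o_2$ and $b+o_2$ are just $\Pi(o_2)\oplus_3 b$ and $o_2\oplus_3 b$ rewritten via $\Pi(u)\equiv -u\pmod 3$).

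There is, however, a genuine slip in the $o_1=3$ branch: you propose that Bob choose $y=2$, but $P_3$ lives in the $2$-$2$-$3$ Bell scenario, so Bob's input alphabet is $\{0,1\}$ only---the value $y=2$ is simply not available. Your own modular computation already points to the fix: with $y=0$ you get $b-o_2\equiv\gamma(y-1)\equiv -\gamma\pmod 3$, i.e.\ $o_2-b\equiv\gamma\pmod 3$, and since $\gamma\in\{0,1\}$ this determines the bit. Equivalently, using $\Pi(u)\equiv -u\pmod 3$, Bob decodes $\tilde\gamma=o_2\oplus_3\Pi(b)$, which is exactly the paper's Table~\ref{tab2} entry for $o_1=3$ (with $y=0$, $b=a$). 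Once this branch is corrected your protocol matches the paper's line for line.
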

\begin{proof}
Given a message \(\gamma \in \{0, 1\}\), Alice chooses her input \(x\) for the NS box \(P_3\) as \(\gamma\) and obtains the outcome \(a\). She then feeds the pair \((\gamma, a)\) as the input \((i_1, i_2)\) to the channel \(\mathcal{N}_3\). The receiver, Bob, selects his input \(y\) for his part of the NS box \(P_3\) based on the outcome \((o_1, o_2)\) received from \(\mathcal{N}_3\) and obtains an outcome \(b\). He then guesses \(\tilde{\gamma}\) depending on \((o_1, o_2)\) and \(b\). The protocol is illustrated in Fig. \ref{fig2}, and the calculation in Table \ref{tab2} demonstrates that Bob always guesses Alice’s message correctly. This completes the proof.
\end{proof}
\begin{table}[h!]
\begin{tabular}{|c|c|c|c|c|}
\hline
~output $o_1$~ & ~output $o_2$~~& ~input $y$~ &~outcome $b$~~& ~~~~guess $\tilde{\gamma}$~~~~\\\hline
$1$ & $\gamma$ & $\times$ & $\times$ & $o_2$ \\\hline
$2$ & $a$ & $1$ & $\gamma\oplus_3 a$ & $\Pi(o_{2})\oplus_3 b$\\\hline
$3$ & $\gamma\oplus_3 a$& $0$ & $a$& $o_2\oplus_3\Pi(b)$ \\\hline
$4$ & ~$\gamma\oplus_3 \Pi(a)$~ &$0$ &$a$ & $o_2\oplus_3 b $\\\hline
\end{tabular}
\caption{For all possible outputs $(o_1,o_2)$ from the channel $\mathcal{N}_3$ and the the outcomes $y$ of the NS box, Bob's guess $\tilde{\gamma}$ exactly matches with Alice's message $\gamma$. Note that, $u\oplus_3\Pi(u)=0$.}\label{tab2}
\end{table}
\noindent Although \(\mathrm{C}^\circ(\mathcal{N}_3) = 0\), one may still ask with what probability Alice will succeed in sending her message \(\gamma \in \{0, 1\}\) to Bob using a single use of the channel \(\mathcal{N}_3\). Let Alice encodes her message \(\gamma = 0\) as the input \((0, 0)\) and \(\gamma = 1\) as the input \((1, 2)\) of the channel. At Bob's end, the outcome \((3, 0)\) is common to both inputs, which results in a success probability for decoding the message of \(\$(\mathcal{N}_3) = 7/8\). Due to the symmetries \((S_1)\) and \((S_2)\), it is evident that this is indeed the optimal success probability. With classical shared randomness this success probability remains same, i.e. \(\$_{SR}(\mathcal{N}_3)=7/8\). As established in our next result, with preshared entanglement the success \(\$_{E}(\mathcal{N}_3)\) can be higher.
\begin{proposition}\label{prop1}
A single use of the channel \(\mathcal{N}_3\) can successfully transmit 1 bit of information with a probability of \(0.9008\) when assisted by a two-qutrit maximally entangled state, and hence demonstrates that \(\$_E(\mathcal{N}_3) > \$_{SR}(\mathcal{N}_3)\).
\end{proposition}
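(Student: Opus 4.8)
\emph{Proof plan.} The plan is to quantize the protocol of Theorem~\ref{theo1}: we keep its structure but replace the box \(P_3\) by a shared two-qutrit maximally entangled state \(\ket{\Phi_3}=\tfrac{1}{\sqrt{3}}\sum_{j=0}^{2}\ket{jj}\), and replace the ``box inputs'' \(x,y\) by local measurements. Explicitly, to send \(\gamma\in\{0,1\}\) Alice measures her half of \(\ket{\Phi_3}\) with a rank-one projective measurement \(\{\Pi^{a}_{\gamma}\}_{a=0}^{2}\), obtains \(a\), and inputs \((i_1,i_2)=(\gamma,a)\) to \(\mathcal{N}_3\); on receiving \((o_1,o_2)\), Bob outputs \(\tilde{\gamma}=o_2\) if \(o_1=1\), and otherwise measures his half with a projective measurement \(\{\Pi^{b}_{y}\}_{b=0}^{2}\), with the setting \(y=y(o_1)\) and the guessing rule \(\tilde{\gamma}=\tilde{\gamma}(o_1,o_2,b)\) taken verbatim from Table~\ref{tab2}. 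With \(p(a,b|x,y):=\bra{\Phi_3}(\Pi^{a}_{x}\otimes\Pi^{b}_{y})\ket{\Phi_3}\), this is a legitimate strategy for a single use of \(\mathcal{N}_3\) assisted by a two-qutrit maximally entangled state.

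First I would reduce the success probability to a Bell-type functional. Since \(\mathcal{N}_3\) emits \(o_1\) uniformly over \(\{1,2,3,4\}\) (symmetry \(S_1\)), the \(o_1=1\) branch is errorless; and -- reading off Table~\ref{tab2} and using \(u\oplus_3\Pi(u)=0\) -- the \(o_1=2\) branch succeeds exactly when Bob's \(y=1\) outcome satisfies \(b=a\oplus_3\gamma\), while both the \(o_1=3\) and \(o_1=4\) branches succeed exactly when his \(y=0\) outcome satisfies \(b=a\). Averaging over a uniformly random message \(\gamma\in\{0,1\}\) gives
\begin{align}
\$_E(\mathcal{N}_3) &= \tfrac14\bigl(1+q_2+2\,q_0\bigr),\\
q_0 &= \tfrac12\sum_{a=0}^{2}\bigl[\,p(a,a\mid 0,0)+p(a,a\mid 1,0)\,\bigr],\\
q_2 &= \tfrac12\sum_{a=0}^{2}\bigl[\,p(a,a\mid 0,1)+p(a,a\oplus_3 1\mid 1,1)\,\bigr].
\end{align}
Hence \(\$_E(\mathcal{N}_3)\) is a fixed linear functional -- a Bell operator in the \(2\)-\(2\)-\(3\) scenario -- of the induced behaviour \(\{p(a,b|x,y)\}\), and the task is to maximise it over (projective) strategies on \(\mathbb{C}^3\otimes\mathbb{C}^3\) with the state \(\ket{\Phi_3}\).

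Second, I would carry out this optimisation using the structure of \(\ket{\Phi_3}\). Parametrising Alice's two measurements by unitaries \(U_0,U_1\in\mathrm{U}(3)\) via \(\Pi^{a}_{x}=U_x\ketbra{a}{a}U_x^\dagger\), and using \(p(a,b|x,y)=\tfrac13\Tr[(\Pi^{a}_{x})^{T}\Pi^{b}_{y}]\), the objective is -- for fixed \(U_0,U_1\) -- linear in Bob's two measurements, so each optimal \(\{\Pi^{b}_{y}\}\) is found from a small semidefinite program; alternating this with the analogous step for Alice (a ``seesaw''), started from many random seeds, converges to the optimum, whose value is the claimed \(0.9008\). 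For the Proposition only the achievability direction is required, so it suffices to exhibit one explicit -- e.g.\ algebraic -- choice of \(U_0,U_1\) together with the induced optimal Bob measurements, substitute into the formula above, and verify the value (optimality, if wanted, follows from a matching semidefinite bound). Since \(\$_{SR}(\mathcal{N}_3)=7/8=0.875<0.9008\le\$_E(\mathcal{N}_3)\), the strict separation \(\$_E(\mathcal{N}_3)>\$_{SR}(\mathcal{N}_3)\) follows immediately.

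I expect the main obstacle to be the optimisation itself, because the functional hides a genuine trade-off that rules out the obvious quantum strategies. If one insists on \(q_0=1\) -- perfect \(b=a\) correlation in the \(y=0\) branch -- then Alice's two measurements \(\{\Pi^{a}_0\}\) and \(\{\Pi^{a}_1\}\) are forced to coincide, which in turn caps \(q_2\le 1/2\) and yields only \(\$_E=7/8\), exactly the classical value. To beat \(7/8\) the measurements must therefore sacrifice a little of the \(y=0\) correlation in return for a larger \(q_2\), i.e.\ sit at a properly non-classical point of the \(2\)-\(2\)-\(3\) quantum set; locating such a point and certifying that it delivers precisely \(0.9008\) is where the real work lies.
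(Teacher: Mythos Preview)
There is a concrete gap. You take Bob's guess $\tilde\gamma$ verbatim from Table~\ref{tab2}, but for a generic quantum correlation this rule produces $\tilde\gamma=2$ with positive probability---an automatically wrong answer since $\gamma\in\{0,1\}$. The paper's protocol explicitly repairs this by mapping $\tilde\gamma\mapsto\tilde\gamma\bmod 2$ (equivalently, guessing uniformly when $\tilde\gamma=2$); see the proof outline and Appendix~\ref{appendix:b}, where every success indicator is $\delta_{\gamma,\tilde\gamma_{\oplus_2}}$ rather than $\delta_{\gamma,\tilde\gamma}$. This post-processing is not cosmetic: it is precisely the origin of the extra terms $\tfrac{1}{36}\mathrm{C}^2(\tfrac{\pi}{4})+\tfrac{1}{18}\mathrm{C}^2(\tfrac{5\pi}{12})$ in the paper's closed form. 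With your unmodified decoding and the CGLMP correlation $P_3^{\phi^+}$ one gets only
\[
\frac{1}{4}\Bigl(1+\tfrac{1}{6}\,\mathrm{C}^2\!\bigl(\tfrac{\pi}{12}\bigr)\Bigr)\;\approx\;0.872\;<\;\tfrac{7}{8},
\]
so the very correlation the paper uses does \emph{not} beat the classical value under your functional.

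You propose to recover by optimising the measurements and assert that the seesaw converges to $0.9008$, but this is unsupported and in fact inconsistent with the paper's route: the paper performs no optimisation at all---it fixes the CGLMP-optimal correlation on $\ket{\phi^+}$ and evaluates the (mod-$2$-corrected) success probability directly. A quick scan over the standard Fourier-type qutrit measurements shows your functional $\tfrac14(1+q_2+2q_0)$ can be pushed somewhat above $7/8$, but it peaks around $\approx 0.89$ in that family, not at $0.9008$; there is no reason for your distinct Bell functional to share the same value. The missing ingredient is simply the $\tilde\gamma\to\tilde\gamma\bmod 2$ step: once included, the CGLMP correlation gives $0.9008$ by direct substitution, and the ``real work'' you anticipate in the optimisation disappears.
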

\begin{proof}
(Outline) Given a generic NS correlation along with the channel \(\mathcal{N}_3\), various protocols could be designed for transmitting the one-bit message. Here, we adapt the protocol outlined in Table \ref{tab2} with minor modifications. When using the correlation \(P_3\) as assistance, this protocol allows Bob’s estimate \(\tilde{\gamma} \in \{0,1\}\) to exactly match Alice’s message \(\gamma\). However, for other NS correlations, \(\tilde{\gamma}\) may take values in \(\{0,1,2\}\). If \(\tilde{\gamma} = 2\), Bob makes a random guess of Alice’s message as either \(0\) or \(1\).

\noindent We employ the quantum correlation achieving the maximum violation of the two-input-three-output Collins-Gisin-Linden-Massar-Popescu (CGLMP) inequality, utilizing a maximally entangled two-qutrit state \cite{Collins2002}. Using this protocol, the resulting success probability becomes \(\$_E(\mathcal{N}_3) = \frac{1}{4}[1+\frac{1}{36} \text{C}^2(\frac{\pi}{4}) + \frac{1}{18} \text{C}^2(\frac{5\pi}{12}) + \frac{1}{6} \text{C}^2(\frac{\pi}{12})] \approx 0.9008 > 7/8\); here \(\text{C}(x)\equiv\text{Cosec}(x)\). This completes the proof with detailed calculations provided in Appendix \ref{appendix:b}.
\end{proof}
\noindent Notably, the success probability in Proposition \ref{prop1} might not be the optimal success achievable with quantum nonlocal correlations. Establishing optimality would require an exhaustive optimization over all possible protocols and all possible quantum-realizable correlations, which we leave as an open question for future research. Instead, we present here a generalization of the channel \(\mathcal{N}_3\). Consider the channel \(\mathcal{N}_m\) with $m\ge2$, which has inputs \((i_1, i_2) \in \{0, 1\} \times \{0, 1, \ldots, m-1\}\) and outputs \((o_1, o_2) \in \{1, 2, \ldots, m+1\} \times \{0, 1, \ldots, m-1\}\). The outcome \(o_2\) is conditioned on \(o_1\) according to the following rule:
\begin{align}
o_2 := 
\begin{cases} 
i_1, & \text{if } o_1 = 1; \\
i_2, & \text{if } o_1 = 2; \\
i_1 \oplus_m \Pi_m^{o_1}(i_2), & \text{if } o_1 \in \{3, \ldots, m+1\},
\end{cases}   
\end{align}
where \(\Pi_m^{o_1}(0) := 0\), and for \(i_2 \neq 0\), \(\Pi_m^{o_1}(i_2) := [(i_2 - 1) \oplus_{m-1} (o_1 - 3)] + 1\). For \(m = 3\), this recovers the channel \(\mathcal{N}_3\) from Eq. \((\ref{n3})\), and for \(m = 2\), it corresponds to the channel \(\mathcal{N}_{\boxtimes}\) defined in Ref. \cite{Prevedel2011}. The stochastic matrix of the channel \(\mathcal{N}_m\) is provided in Appendix \ref{appendix:a}, from which it is evident that the confusability graph \(\mathbb{G}(\mathcal{N}_m)\) corresponds to \(K_{2m}\), the complete graph with \(2m\) vertices. Consequently, we have \(\mathrm{C}_{SR}^\circ(\mathcal{N}_m) = 0\) for all values of \(m\). However, by utilizing the correlation \(P_m\), we can generalize Theorem \ref{theo1}.
\begin{theorem}\label{theo2}
For all $m\ge2$, with a single use of the channel \(\mathcal{N}_m\), one bit of information can be perfectly transmitted from the sender to the receiver when assisted by the NS correlation \(P_m\), resulting in \(\mathrm{C}^\circ_{NS}(\mathcal{N}_m) \geq 1\). 
\end{theorem}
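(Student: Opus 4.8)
The plan is to mimic the proof of Theorem~\ref{theo1} verbatim, replacing the explicit $3$-valued modular arithmetic by the $m$-valued version built into the definition of $\mathcal{N}_m$ and $P_m$. Concretely, Alice encodes her bit $\gamma\in\{0,1\}$ by using $x=\gamma$ as her input to the box $P_m$, obtaining outcome $a\in\{0,\dots,m-1\}$, and feeds $(i_1,i_2)=(\gamma,a)$ into the channel. Bob receives $(o_1,o_2)$; as in Table~\ref{tab2} the value of $o_1$ tells him which ``branch'' of the channel fired, and this dictates which input $y(o_1,o_2)$ he should feed into his half of $P_m$ and how to post-process his outcome $b$ together with $o_2$ to recover $\gamma$. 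The main work is to write down, for each $o_1\in\{1,2,\dots,m+1\}$, the correct choice of $y$ and the correct decoding function $\tilde\gamma(o_1,o_2,b)$, and to verify it against the support condition of $P_m$, namely $(b-a)\bmod m = x\cdot y$.

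The verification splits into the same cases as before. For $o_1=1$ we have $o_2=i_1=\gamma$ directly, so Bob outputs $\tilde\gamma=o_2$ and need not use the box at all. For $o_1=2$ we have $o_2=i_2=a$; Bob should pick $y=1$, so that the box guarantees $b = a + x = a+\gamma \pmod m$, whence $\gamma = (b-o_2)\bmod m$ (note $\gamma\in\{0,1\}$ so this is unambiguous). For $o_1\in\{3,\dots,m+1\}$ we have $o_2 = i_1 \oplus_m \Pi_m^{o_1}(i_2) = \gamma \oplus_m \Pi_m^{o_1}(a)$; here Bob should pick $y=0$, so the box gives $b=a$ deterministically (since $x\cdot y=0$), and then he computes $\Pi_m^{o_1}(b)=\Pi_m^{o_1}(a)$ and recovers $\gamma = \bigl(o_2 - \Pi_m^{o_1}(b)\bigr)\bmod m$. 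The only structural fact needed is that $\Pi_m^{o_1}$ is a well-defined function of $i_2$ that Bob can evaluate knowing $o_1$ and $b$ — which is immediate from the definition $\Pi_m^{o_1}(0):=0$, $\Pi_m^{o_1}(i_2):=[(i_2-1)\oplus_{m-1}(o_1-3)]+1$ — together with the fact that $\gamma$ being a bit makes the subtraction mod $m$ injective on the relevant range. I would summarize all of this in a table analogous to Table~\ref{tab2}.

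The main (minor) obstacle is bookkeeping: one must check that the map $(o_1,o_2)\mapsto y$ is consistent, i.e.\ that for every output pair actually producible by $\mathcal{N}_m$ the prescribed $y$ together with the $P_m$ support condition pins down $b$ sufficiently to decode, and that no output pair is ever ambiguous between $\gamma=0$ and $\gamma=1$. This follows from symmetries analogous to $(S_1)$--$(S_2)$: each input tuple yields $m+1$ outputs and, crucially, the decoding never requires distinguishing the full outcome $a$, only the residue of $\gamma$. Since $P_m$ for input $y=0$ forces $b=a$ and for $y=1$ forces $b=a+x$, in every branch Bob learns exactly the one bit $\gamma$ and nothing spurious is introduced. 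Hence Bob's guess $\tilde\gamma$ always equals $\gamma$, giving a perfect one-bit channel and therefore $\mathrm{C}^\circ_{NS}(\mathcal{N}_m)\ge 1$.
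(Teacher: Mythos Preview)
Your proposal is correct and follows essentially the same approach as the paper: the paper's proof simply points to Table~\ref{tab3}, which records exactly the case split you describe ($o_1=1$ gives $\tilde\gamma=o_2$; $o_1=2$ uses $y=1$ and decodes $\tilde\gamma=\hat\Pi_m(o_2)\oplus_m b$, i.e.\ your $(b-o_2)\bmod m$; $o_1=l\in\{3,\dots,m+1\}$ uses $y=0$, giving $b=a$, and decodes $\tilde\gamma=o_2\oplus_m\hat\Pi_m(\Pi_m^l(b))$, i.e.\ your $(o_2-\Pi_m^{o_1}(b))\bmod m$). Your use of explicit subtraction mod~$m$ is equivalent to the paper's additive-inverse map $\hat\Pi_m$, and your extra paragraph of bookkeeping is sound but not strictly needed.
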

\noindent The proof follows exactly as in Theorem \ref{theo1}, with Table \ref{tab1} being generalized to Table \ref{tab3}. Notably, a single copy of the correlation box \( P_m \) is sufficient to raise the zero-error capacity of the channel \( \mathcal{N}_{m^\prime} \) to one bit, provided \( m \ge m^\prime \). However, if \( m < m^\prime \), our protocol does not enable a nonzero capacity. On the other hand, while Theorem \ref{theo2} demonstrates a one-bit advantage in zero-error communication with NS correlation, this advantage can be made arbitrarily large by considering the channel \(\Lambda\equiv (\mathcal{N}_m)^{\otimes N}\). While \(\mathrm{C}^\circ(\Lambda) = 0\), when assisted by \(P_m^{\otimes N}\) the channel \(\Lambda\) can perfectly transmit \(N\)-bits of information.\\
\begin{table}[h!]
\begin{tabular}{|c|c|c|c|c|}
\hline
output $o_1$ & output $o_2$& ~~input $y~~$ &~~outcome $b$~~& ~~~~~~~~~~~guess $\tilde{\gamma}$~~~~~~~~~~~\\\hline
$1$ & $\gamma$ & $\times$ & $\times$ & $o_2$ \\\hline
$2$ & $a$ & $1$ & $\gamma\oplus_m a$ & $\Pi(o_{2})\oplus_m b$\\\hline
$l$ & $\gamma\oplus_m \Pi_m^{l}(a)$ & $0$ & $a$ & $o_2\oplus_m\hat{\Pi}_m(\Pi_m^{l}(b)) $\\\hline
\end{tabular}
\caption{Here, $l\in\{3,4,\cdots,m+1\}$; and $\hat{\Pi}_m:\{0,1,\cdots,m-1\}\to \{0,1,\cdots,m-1\}$ such that $u\oplus_m\hat{\Pi}_m(u)=0$, for all $u\in\{0,1,\cdots,m-1\}$.}\label{tab3}
\end{table}\\\\
We now demonstrate that the nonlocal advantage in zero-error communication can be made arbitrarily large without requiring multiple copies of the $2$-$2$-$m$ extremal NS correlation. Instead, this can be achieved using a single copy of $2$-$m$-$2$ extremal correlations. The complete characterization of such extremal correlation is established in \cite{Jones2005}. We consider a class of such extremal correlations \( \tilde{R}_m \equiv \{\tilde{r}_m(a, b | x, y)\} \) defined as:  
\begin{align}
\tilde{r}_m(a, b | x, y) := 
\begin{cases} 
\frac{1}{2}, ~~\text{if } a \oplus_2 b
= \sum_{i=1}^{m-1} \delta_{q,i} \cdot \delta_{\alpha,i}; \\     
0,~~\text{otherwise}.
\end{cases}
\end{align}
Consider now the following family of channels \( \mathcal{M}_m \equiv \{p(o_1, o_2 | i_1, i_2)\} \) with inputs \( (i_1, i_2) \in \{0, 1, \ldots, m-1\} \times \{0, 1\} \) and outputs \( (o_1, o_2) \in \{1, 2, \ldots, m(m-1)+1\} \times \{0, 1, \ldots, m-1\} \), where \( m \geq 2 \). The output \( o_2 \) is conditioned on \( o_1 \) as follows:
\begin{align}
o_2 :=
\begin{cases} 
i_1, ~~~~\text{if } o_1 = 1; \\
i_1 \oplus_m \Pi_m^{o_1-2}(i_2), ~~~~ \text{if } o_1 \in \{2, \cdot\cdot, m\}; \\
i_1 \oplus_m \Pi_m^{o_1-(m+1)}(i_2), \\
\text{if } o_1 \in \{m+1, \cdot\cdot, 2m-1\}; \\
\vdots \\
i_1 \oplus_m \Pi_m^{o_1-(m-1)^2}(i_2), \\
\text{if } o_1 \in \{(m-1)^2+1, \cdot\cdot, m(m-1)+1\};
\end{cases}
\end{align}
where \( \Pi_m^{o_1-k}(0) := 0 \), and for \( i_2 \neq 0 \), \( \Pi_m^{o_1-k}(i_2) := [(i_2 - 1) \oplus_{m-1} (o_1 - k)] + 1 \). Using this class of channels, we establish our next result.
\newpage
\begin{figure}[h!]
\centering
\begin{minipage}{.4\textwidth}
\centering
\begin{tikzpicture}
\node[shape=circle,draw=black,fill=blue!5] (00) at (0,5.2) {$(0,0)$};
\node[shape=circle,draw=black,fill=blue!5] (01) at (3,5.2) {$(0,1)$};
\node[shape=circle,draw=black,fill=blue!5] (02) at (4.5,2.6) {$(0,2)$};
\node[shape=circle,draw=black,fill=blue!5] (10) at (3,0) {$(1,0)$};
\node[shape=circle,draw=black,fill=blue!5] (11) at (0,0) {$(1,1)$};
\node[shape=circle,draw=black,fill=blue!5] (12) at (-1.5,2.6) {$(1,2)$};
\node[] at (1.5,5.4) {$\textcolor{purple}{(1,0)}$};
\node[] at (3.95,3.9) {\rotatebox{300}{$\textcolor{purple}{(1,0)}$}};
\node[] at (1.15,4.7) {$\rotatebox{330}{$\textcolor{purple}{(1,0)}$}$};
\node[] at (1.5,-0.2) {$\textcolor{blue}{(1,1)}$};
\node[] at (-0.8,1.0) {\rotatebox{300}{$\textcolor{blue}{(1,1)}$}};
\node[] at (0.4,1.7) {\rotatebox{330}{$\textcolor{blue}{(1,1)}$}};
\node[] at (1.1,3.6) {\rotatebox{-60}{${(2,0)}$}};
\node[] at (1.9,3.6) {\rotatebox{60}{${(2,1)}$}};
\node[] at (2.3,2.75) {\rotatebox{0}{${(2,2)}$}};
\node[] at (-1,3.9) {\rotatebox{-300}{${(3,0)}$}};
\node[] at (3.15,1.2) {\rotatebox{-90}{${(3,1)}$}};
\node[] at (1.8,1.2) {\rotatebox{-330}{${(3,2)}$}};
\node[] at (0.2,3.05) {\rotatebox{-90}{${(4,0)}$}};
\node[] at (3.9,1.2) {\rotatebox{-300}{${(4,1)}$}};
\node[] at (-0.5,3.4) {\rotatebox{30}{${(4,2)}$}};
\path [purple] (00) edge node[left] {} (01);
\path [purple] (00) edge node[left] {} (02);
\path [] (00) edge node[left] {} (10);
\path [] (00) edge node[left] {} (11);
\path [] (00) edge node[left] {} (12);
\path [purple] (01) edge node[left] {} (02);
\path [] (01) edge node[left] {} (10);
\path [] (01) edge node[left] {} (11);
\path [] (01) edge node[left] {} (12);
\path [] (02) edge node[left] {} (10);
\path [] (02) edge node[left] {} (11);
\path [] (02) edge node[left] {} (12);
\path [blue] (10) edge node[left] {} (11);
\path [blue] (10) edge node[left] {} (12);
\path [blue] (11) edge node[left] {} (12);
\end{tikzpicture}
\end{minipage}
\hspace{1.5cm}
\begin{minipage}{.4\textwidth}
\centering
\begin{tikzpicture}
\node[shape=circle,draw=black,fill=purple!5] (00) at (0,5.2) {$(0,0)$};
\node[shape=circle,draw=black,fill=purple!5] (01) at (3,5.2) {$(0,1)$};
\node[shape=circle,draw=black,fill=purple!5] (02) at (4.5,2.6) {$(2,1)$};
\node[shape=circle,draw=black,fill=purple!5] (10) at (3,0) {$(1,0)$};
\node[shape=circle,draw=black,fill=purple!5] (11) at (0,0) {$(1,1)$};
\node[shape=circle,draw=black,fill=purple!5] (12) at (-1.5,2.6) {$(2,0)$};
\node[] at (1.5,5.4) {$(1,0)$};
\node[] at (3.95,3.9) {\rotatebox{300}{$(7,2)$}};
\node[] at (1.15,4.7) {\rotatebox{333}{$(2,0)^\star$}};
\node[] at (1.5,-0.2) {$(1,1)$};
\node[] at (-0.8,1.0) {\rotatebox{300}{$(2,2)$}};
\node[] at (0.4,1.7) {\rotatebox{330}{$(4,2)^\star$}};
\node[] at (1.1,3.6) {\rotatebox{-60}{$(5,0)$}};
\node[] at (1.9,3.6) {\rotatebox{60}{$(4,1)$}};
\node[] at (2.3,2.75) {\rotatebox{0}{$(1,2)$}};
\node[] at (-1,3.9) {\rotatebox{-300}{$(6,0)$}};
\node[] at (3.15,1.2) {\rotatebox{-90}{${(2,1)}^\star$}};
\node[] at (1.8,1.2) {\rotatebox{-330}{$(5,1)^\star$}};
\node[] at (0.2,3.05) {\rotatebox{-90}{$(3,0)^\star$}};
\node[] at (3.9,1.2) {\rotatebox{-300}{$(3,1)$}};
\node[] at (-0.5,3.4) {\rotatebox{30}{$(3,2)^\star$}};
\path [] (00) edge node[left] {} (01);
\path [] (00) edge node[left] {} (02);
\path [] (00) edge node[left] {} (10);
\path [] (00) edge node[left] {} (11);
\path [] (00) edge node[left] {} (12);
\path [] (01) edge node[left] {} (02);
\path [] (01) edge node[left] {} (10);
\path [] (01) edge node[left] {} (11);
\path [] (01) edge node[left] {} (12);
\path [] (02) edge node[left] {} (10);
\path [] (02) edge node[left] {} (11);
\path [] (02) edge node[left] {} (12);
\path [] (10) edge node[left] {} (11);
\path [] (10) edge node[left] {} (12);
\path [] (11) edge node[left] {} (12);
\end{tikzpicture}
\end{minipage}
\caption{Left: Confusability graph \(\mathbb{G}(\mathcal{N}_3)\). Right: Confusability graph \(\mathbb{G}(\mathcal{M}_3)\). Nodes represent the inputs, while edges denote outputs that induce confusion between the connected nodes. Despite both graphs being \(K_6\), the channels \(\mathcal{N}_3\) and \(\mathcal{M}_3\) exhibit a key difference. In \(\mathbb{G}(\mathcal{N}_3)\), certain outputs cause confusion among more than two inputs. For example, the output tuple \((1, l)\) results in confusability among the inputs \(\{(l,0), (l,1), (l,2)\}\), where \textcolor{blue}{$l=1$} and \textcolor{purple}{$l=0$}, respectively. This behavior is absent in \(\mathbb{G}(\mathcal{M}_3)\). Instead, in this case, multiple output tuples may lead to confusability between the same pair of inputs. For instance, the inputs \((0,0)\) and \((2,1)\) are confusable for both the outputs \((2,0)\) and \((4,0)\). Such cases are marked with an additional star, and the corresponding outputs are detailed in Table \ref{tab6}.}\label{fig4}
\vspace{-.5cm}
\end{figure}
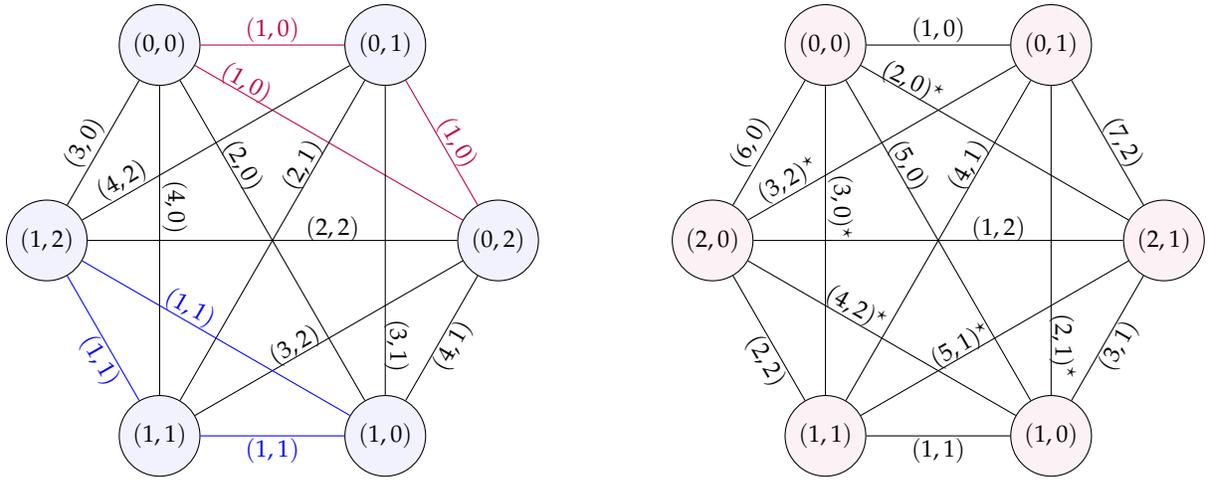
\begin{theorem}\label{theo3}
For all \( m \geq 2 \), we have \( \mathrm{C}^\circ_{SR}(\mathcal{M}_m) = 0 \). However, in assistance with the correlation \( \tilde{R}_m \), \(\log m\)-bits of information can be perfectly transmitted through the channel $\mathcal{M}_m$, implying that \( \mathrm{C}^\circ_{NS}(\mathcal{M}_m) \geq \log m \). \end{theorem}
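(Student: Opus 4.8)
\noindent The proof follows the two-step template of Theorems~\ref{theo1} and~\ref{theo2}. \emph{Step 1: the unassisted (hence shared-randomness) capacity vanishes.} Here I would show that the confusability graph $\mathbb{G}(\mathcal{M}_m)$ is the complete graph $K_{2m}$ on the $2m$ inputs, so that $\alpha(\mathbb{G}(\mathcal{M}_m))=1$ and therefore $\mathrm{C}^\circ_{SR}(\mathcal{M}_m)=\mathrm{C}^\circ(\mathcal{M}_m)=\log 1=0$, using that shared randomness cannot exceed the unassisted single-shot zero-error capacity. Concretely one must exhibit, for every pair of distinct inputs, an output $(o_1,o_2)$ reachable from both; this is a finite case analysis organised by how the two input tuples differ. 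The branch $o_1=1$ (which outputs $o_2=i_1$) disposes of every pair sharing the first coordinate, while the remaining pairs are caught on the ``permutation'' branches, where the family $\{\Pi_m^{t}\}$ supplies a full set of cyclic rotations of the nonzero residues mod $m$, so that any prescribed gap between the first coordinates is reproduced on some branch. The $m=3$ instance is exactly the graph in Fig.~\ref{fig4} (with the pairs that collide at more than one output marked by stars and listed in Table~\ref{tab6}); the general stochastic matrix and this completeness verification would be placed in Appendix~\ref{appendix:a}.

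\noindent \emph{Step 2: an $\tilde R_m$-assisted error-free protocol transmitting $\log m$ bits.} This is the $2$-$m$-$2$ counterpart of the protocol of Theorem~\ref{theo1}. Given the message $\gamma\in\{0,\dots,m-1\}$, Alice inputs $x=\gamma$ into her half of $\tilde R_m$, records the output bit $a$, and feeds $(i_1,i_2)=(\gamma,a)$ into $\mathcal{M}_m$ (the direct analogue of the encoding used in Theorem~\ref{theo1}). On receiving $(o_1,o_2)$, Bob first reads from $o_1$ which branch of the channel fired. On the reveal branch he outputs $\tilde\gamma=o_2$. On any permutation branch, the pair $(o_1,o_2)$ is consistent with exactly two messages, one for each value of the hidden bit $a$; Bob then picks his input $y$ to $\tilde R_m$ to be one of these two candidates and exploits the correlation $a\oplus_2 b=\sum_{i=1}^{m-1}\delta_{x,i}\delta_{y,i}$, which is nonzero precisely when $x=y$ is a nonzero index, to decide which candidate is the true message and outputs it. Assembling this into a case table in the style of Tables~\ref{tab2}--\ref{tab3} and checking $\tilde\gamma=\gamma$ in every row gives $\mathrm{C}^\circ_{NS}(\mathcal{M}_m)\geq\log m$.

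\noindent The delicate point — the analogue of the identities $u\oplus_m\hat\Pi_m(u)=0$ that make Table~\ref{tab3} work — is to choose Bob's query $y$ and his decoder so that the single bit he extracts from $\tilde R_m$ kills the two-fold ambiguity on every permutation branch at once; this is precisely what forces the permutation layer of $\mathcal{M}_m$ and the ``diagonal'' support of $\tilde R_m$ to be designed in tandem, and it is the one place where the resource is genuinely used as a nonlocal (rather than merely shared-random) correlation, consistently with Step~1. Finally, the base case $m=2$, where $\tilde R_2$ is the Popescu--Rohrlich box, should be verified directly.
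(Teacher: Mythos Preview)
Your Step~1 is fine and matches the paper: one shows $\mathbb{G}(\mathcal{M}_m)=K_{2m}$ by the case split you indicate (the paper carries this out in Appendix~\ref{appendix:c}).

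Your Step~2, however, has a genuine gap. The decoding rule ``Bob picks $y$ to be one of the two candidate messages'' does not work in general, and your supporting claim that the two candidates always correspond to \emph{different} values of the hidden bit $a$ is also false. Take $m=3$ and the channel output $(o_1,o_2)=(2,2)$ (see Table~\ref{tab6}): the two compatible inputs are $(\gamma,a)\in\{(1,1),(2,0)\}$. If Bob queries $y=1$, the correlation gives $b=a\oplus_2\delta_{\gamma,1}$, which is $1\oplus_21=0$ for the first candidate and $0\oplus_20=0$ for the second --- indistinguishable. Querying $y=2$ gives $b=1$ in both cases by the same computation. So neither candidate works as the query. (And for the output $(4,1)$, both candidates $(0,1)$ and $(1,1)$ have $a=1$, contradicting ``one for each value of $a$''.)

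The paper's protocol avoids this by choosing $y$ from $o_1$ alone: the values $o_1\in\{2,\dots,m(m-1)+1\}$ are partitioned into blocks $\mathcal S^j$ of size $m-1$, and on block $j$ the channel outputs $o_2=\gamma\oplus_m\Pi_m^{\,\cdot}(a\oplus_2\delta_{\gamma,j})$. Bob sets $y=j$; then $\tilde R_m$ returns $b=a\oplus_2\delta_{\gamma,j}$ exactly, so $\tilde\gamma=o_2\oplus_m\hat\Pi_m(\Pi_m^{\,\cdot}(b))=\gamma$. The point is that $y$ must match the $\delta_{\gamma,j}$ built into the \emph{channel branch}, not one of the candidate \emph{messages}; this is the ``designed in tandem'' phenomenon you allude to in your last paragraph, but your concrete rule does not implement it. Replacing ``pick a candidate'' by ``set $y$ equal to the block index read off from $o_1$'' (as in Table~\ref{tab4}) fixes the argument.
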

\begin{proof}
The confusability graph \( \mathbb{G}(\mathcal{M}_m) \) of the channel \( \mathcal{M}_m \) is a complete graph \( K_{2m} \) with \( 2m \) vertices (see Appendix \ref{appendix:c}). Consequently, \( \mathrm{C}^\circ_{SR}(\mathcal{M}_m) = 0 \). However, with the assistance of the NS correlation \( \tilde{R}_m \), Alice can transmit \( \log m \)-bits of information perfectly to Bob by employing a protocol analogous to Theorem \ref{theo2}. Given a message \(\gamma \in \{0, 1, \cdots, m-1\}\), Alice chooses her input \(x\) for the NS box \(\tilde{R}_m\) as \(\gamma\) and obtains the outcome \(a\). She then feeds the pair \((\gamma, a)\) as the input \((i_1, i_2)\) to the channel \(\mathcal{M}_m\). The receiver, Bob, selects his input \(y\) for his part of the NS box \(\tilde{R}_m\) based on the outcome \((o_1, o_2)\) received from \(\mathcal{M}_m\) and obtains an outcome \(b\). He then guesses \(\tilde{\gamma}\) depending on \((o_1, o_2)\) and \(b\). The protocol is illustrated in Fig. \ref{fig3}, and the calculation in Table \ref{tab4} demonstrates that Bob always guesses Alice’s message correctly. 
\begin{figure}[h!]
\centering
\begin{tikzpicture}
\draw[fill=purple!10](0,0)--(0,1.5)--(.90,1.5)--(.90,0)--(0,0);
\draw[thin,black](-1.5,1.2)--(0,1.2);
\draw[thin,black](-1.5,.3)--(0,0.3);
\draw[thin,black](.90,1.2)--(2.25,1.2);
\draw[thin,black](.90,.3)--(2.25,0.3);
\draw[blue, fill=black] (-1.5,1.2) circle (.08);
\draw[blue, fill=black] (-1.5,.3) circle (.08);
\draw[blue, fill=black] (2.25,1.2) circle (.08);
\draw[blue, fill=black] (2.25,0.3) circle (.08);
\node[] at (.45,.7){\large{$\mathcal{M}_m$}};
\node[] at (-1.6,1.5){$i_1=\gamma$};
\node[] at (-1.6,.6){$i_2=a$};
\node[] at (2.6,1.2){$o_1$};
\node[] at (2.6,.3){$o_2$};

\draw[fill=green!10](-1.75,-2)--(-1.75,-1)--(2.5,-1)--(2.5,-2)--(-1.75,-2);
\draw[thin,black](-1.25,-.2)--(-1.25,-1);
\draw[thin,black](-1.25,-2)--(-1.25,-2.8);
\draw[thin,black](2,-.2)--(2,-1);
\draw[thin,black](2,-2)--(2,-2.8);
\node[] at (-1.25,-.6)[rotate=180,inner sep=0pt] {\tikz\draw[-triangle 90](0,0) ;};
\node[] at (-1.25,-2.6)[rotate=180,inner sep=0pt] {\tikz\draw[-triangle 90](0,0) ;};
\node[] at (2,-.6)[rotate=180,inner sep=0pt] {\tikz\draw[-triangle 90](0,0) ;};
\node[] at (2,-2.6)[rotate=180,inner sep=0pt] {\tikz\draw[-triangle 90](0,0) ;};
\draw [thick,dotted] (-1.3,-2.8) .. controls (-2,-3) and (-3,-2) .. (-1.6,.18);

\node[] at (-2.26,-1.3)[rotate=-5,inner sep=0pt] {\tikz\draw[-triangle 90](0,0) ;};
\node[] at (.4,-1.5){\large{$\mathcal{\tilde{R}}_m$}};
\node[] at (-.6,-.6){$x=\gamma$};
\node[] at (1.2,-.6){$y(o_1,o_2)$};
\node[] at (-.8,-2.4){$a$};
\node[] at (1.5,-2.4){$b$};

\draw[thin,black, dashed](-2.5,-3)--(-2.5,2)--(3.2,2)--(3.2,-3)--(-2.5,-3);
\draw[thin,black](-3,1.5)--(-2.5,1.5);
\draw[thin,black](3.2,-2.3)--(3.7,-2.3)--(3.7,-2.6);
\node[] at (-2.7,1.5)[rotate=270,inner sep=0pt] {\tikz\draw[-triangle 90](0,0) ;};
\node[] at (3.7,-2.7)[rotate=180,inner sep=0pt] {\tikz\draw[-triangle 90](0,0) ;};
\node[] at (-3.2,1.5){\textcolor[rgb]{0,0,1}{$\gamma$}};
\node[] at (4.1,-2.9){\textcolor[rgb]{0,.3,.8}{$\tilde{\gamma}(o_1,o_2,b)$}};
\node[] at (-3.,1.9){\textcolor[rgb]{0,0,1}{Alice}};
\node[] at (4,-3.2){\textcolor[rgb]{0,.3,.8}{Bob}};
\end{tikzpicture}
\vspace{-.73cm}
\caption{Given the resources $\mathcal{M}_m~\&~\tilde{R}_m$, $\log m$ bits of information [i.e. $\gamma\in\{0,1\cdots,m-1\}$] can be transmitted perfectly from Alice to Bob.}\label{fig3}
\end{figure}
\begin{table}[h!]
\begin{tabular}{|c|c|c|c|c|}
\hline
output $o_1$ & output $o_2$& ~~input $y$~~ &outcome $b$& ~~~~guess $\tilde{\gamma}$~~~~\\\hline
$1$ & $\gamma$ & $\times$ & $\times$ & $o_2$ \\\hline
$2$ & $\gamma \oplus_m \Pi^{0}_m(a)$ & $0$ & $ a$ & $o_2\oplus_m \hat{\Pi}_{m}(\Pi^0_m(b))$\\\hline
$3$ & $\gamma\oplus_m\Pi^{1}_m(a)$ & $0$ & $ a$ & $o_2\oplus_m \hat{\Pi}_{m}(\Pi^{1}_m(b))$\\\hline
\vdots &\vdots &\vdots &\vdots &\vdots \\\hline
$m$ & $\gamma\oplus_m \Pi_m^{m-2}(a)$ & $0$ & $a$ & $o_2\oplus_m\hat{\Pi}_m(\Pi_m^{m-2}(b)) $\\\hline
$m+1$ & $\gamma \oplus_m \Pi^{0}_m(a\oplus_2\delta_{\gamma,1})$ & $1$ & $ a\oplus_2\delta_{\gamma,1} $ & $o_2\oplus_m \hat{\Pi}_{m}(\Pi^0_m(b))$\\\hline
\vdots &\vdots &\vdots &\vdots &\vdots \\\hline
$2m-1$ & $\gamma \oplus_m \Pi^{m-2}_m(a\oplus_2\delta_{\gamma,1})$ & $1$ & $ a\oplus_2\delta_{\gamma,1} $ & $o_2\oplus_m \hat{\Pi}_{m}(\Pi^{m-2}_{m}(b))$\\\hline
\vdots &\vdots &\vdots &\vdots &\vdots \\\hline
\vdots &\vdots &\vdots &\vdots &\vdots \\\hline
$(m-1)^2+1$ & $\gamma \oplus_m \Pi^{0}_m(a\oplus_2\delta_{\gamma,m-1})$ & $m-1$ & $ a\oplus_2\delta_{\gamma,m-1} $ & $o_2\oplus_m \hat{\Pi}_{m}(\Pi^0_m(b))$\\\hline
\vdots &\vdots &\vdots &\vdots &\vdots \\\hline
$m(m-1)+1$ & $\gamma \oplus_m \Pi^{m-2}_m(a\oplus_2\delta_{\gamma,m-1})$ & $m-1$ & $ a\oplus_2\delta_{\gamma,m-1} $ & ~~$o_2\oplus_m \hat{\Pi}_{m}(\Pi^{m-2}_{m}(b))$~~\\\hline
\end{tabular}
\caption{Here, $l\in\{3,4,\cdots,m+1\}$; and $\hat{\Pi}_m:\{0,1,\cdots,m-1\}\to \{0,1,\cdots,m-1\}$ such that $u\oplus_m\hat{\Pi}_m(u)=0$, for all $u\in\{0,1,\cdots,m-1\}$.}\label{tab4}
\end{table}
\end{proof}
\noindent Since all extremal NS correlations in the $2$-$m$-$2$ scenario contain the $2$-$2$-$2$ Popescu-Rohrlich correlation \cite{Jones2005}, they are all capable of enabling the perfect transmission of 1-bit of information through the channel \( \mathcal{M}_2 \), which is identical to the channel \( \mathcal{N}_\boxtimes \) previously studied in \cite{Prevedel2011}. A result similar to Proposition \ref{prop1} also follows in this case (the proof provided in Appendix \ref{appendix:d}).
\begin{proposition}\label{prop2}
A single use of the channel \(\mathcal{M}_3\) can successfully transmit $\log3$ bits of information with a probability of \(6/7\) when assisted by a two-qubit maximally entangled state, and hence demonstrates that \(\$_E(\mathcal{M}_3) > \$_{SR}(\mathcal{M}_3)=17/21\).
\end{proposition}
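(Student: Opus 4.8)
The plan is to follow the template of Proposition~\ref{prop1}: keep Alice's encoding and Bob's decoding rule from Theorem~\ref{theo3} (Table~\ref{tab4}) exactly as they are, but replace the assisting extremal box \(\tilde{R}_3\) by the correlation \(E_{xy}:=\langle A_x\otimes B_y\rangle\) of a two-qubit maximally entangled state measured with dichotomic \(\pm1\)-valued observables \(\{A_x\}_{x=0}^{2}\) and \(\{B_y\}_{y=0}^{2}\). The first step is a reduction: for a message \(\gamma\in\{0,1,2\}\), Bob's decoded symbol \(\tilde\gamma\) equals \(\gamma\) precisely when the box outcomes \((a,b)\) satisfy the parity \(a\oplus_2 b=\delta_{\gamma,y}\) that \(\tilde{R}_3\) would have enforced, \(y=y(o_1)\in\{0,1,2\}\) being the setting Bob uses for the received channel output; any other outcome gives \(\tilde\gamma\neq\gamma\). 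This is because the permutations \(\Pi_3^{t}\) and \(\hat\Pi_3\) entering Table~\ref{tab4} are injective on the (binary) domain that \(b\) lives in, so a flipped box outcome always produces a wrong guess and a correctly correlated one always a right guess, with no accidental coincidences; unlike Proposition~\ref{prop1}, Bob never needs to abstain.

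Since \(\mathcal{M}_3\) emits its first register \(o_1\) uniformly over the \(m(m-1)+1=7\) values, with \(o_1=1\) always decoding correctly and the other six values splitting into three pairs for which Bob uses \(y=0,1,2\) respectively, I expect the averaged success probability to collapse to
\[
\$(\mathcal{M}_3)=\frac{12+\mathcal{B}}{21},\qquad \mathcal{B}:=E_{00}+E_{01}+E_{02}+E_{10}+E_{12}+E_{20}+E_{21}-E_{11}-E_{22},
\]
an identity valid for any assisting bipartite correlation with these correlators; \(\mathcal{B}\) is a Bell-type functional whose algebraic maximum \(9\) is realised by \(\tilde{R}_3\), recovering \(\$=1\) (Theorem~\ref{theo3}). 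For shared randomness \(E_{xy}=\langle A_x\rangle_\lambda\langle B_y\rangle_\lambda\) with deterministic \(\pm1\) assignments, and maximising \(\mathcal{B}\) amounts to \(\max_{B_0,B_1,B_2\in\{\pm1\}}\big(|B_0+B_1+B_2|+|B_0-B_1+B_2|+|B_0+B_1-B_2|\big)=5\); hence \(\$_{SR}(\mathcal{M}_3)\le 17/21\), and the matching explicit classical strategy (the assignment attaining \(\mathcal{B}=5\), equivalently a direct optimal encoder/decoder verified by a finite search over the \(K_{6}\) structure of \(\mathbb{G}(\mathcal{M}_3)\)) gives \(\$_{SR}(\mathcal{M}_3)=17/21\).

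For the quantum part, placing all measurement directions in one plane of the Bloch sphere gives \(E_{xy}=\cos(\theta^A_x-\theta^B_y)\). I would then exhibit \(\theta^A=(0,\tfrac{\pi}{3},-\tfrac{\pi}{3})\) and \(\theta^B=(0,-\tfrac{\pi}{3},\tfrac{\pi}{3})\), for which \(E_{00}=E_{12}=E_{21}=1\), \(E_{01}=E_{02}=E_{10}=E_{20}=\tfrac12\) and \(E_{11}=E_{22}=-\tfrac12\); this yields \(\mathcal{B}=6\) and hence \(\$_E(\mathcal{M}_3)\ge\tfrac{12+6}{21}=\tfrac{6}{7}>\tfrac{17}{21}=\$_{SR}(\mathcal{M}_3)\). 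A short one-parameter optimisation — using the rotational freedom to set \(\theta^A_0=\theta^B_0=0\) and the symmetry between the \(\gamma=1\) and \(\gamma=2\) arms to take \(\theta^A_{1,2}=\pm\alpha\), \(\theta^B_{1,2}=\mp\alpha\), reducing \(\mathcal{B}\) to \(1+4\cos\alpha+4\sin^2\alpha\) — shows \(6\) is the best value attainable with a two-qubit maximally entangled state, so \(6/7\) is exactly the quoted number.

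The main obstacle I anticipate is the bookkeeping in the reduction step rather than any of the optimisations: one has to track the precise output structure of \(\mathcal{M}_3\) — the partition of the seven values of \(o_1\) into the singleton \(\{o_1=1\}\) and three groups of size \(m-1=2\), the shift \(t\) of \(\Pi_3^{t}\) inside each group, and the way the channel's second register absorbs the \(\delta_{\gamma,j}\) term appearing in Table~\ref{tab4} — carefully enough to be certain that ``Bob is correct \(\iff\) the box returns the ideal parity'' holds with exactly the stated weights on all seven branches. Once the identity \(\$(\mathcal{M}_3)=(12+\mathcal{B})/21\) is established, the classical bound \(\mathcal{B}\le5\) and the quantum construction \(\mathcal{B}=6\) are both short, and the strict separation \(\$_E>\$_{SR}\) is immediate.
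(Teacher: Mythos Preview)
Your proposal is correct and arrives at the same numbers as the paper, but the organisation is different and in one respect cleaner. The paper (Appendix~\ref{appendix:d}) plugs the specific quantum correlation \(R^{\psi^-}_3\) from the \(I_{3322}\) literature into the Table~\ref{tab4} protocol and computes the success probability case-by-case over \(o_1\in\{1,\dots,7\}\), obtaining \(6/7\); for the classical value it argues directly over encodings by partitioning the six inputs into the two triples \(\mathcal{T}_1,\mathcal{T}_2\) and counting confusable outputs. Your route instead isolates the identity \(\$(\mathcal{M}_3)=(12+\mathcal{B})/21\) for the Bell-type functional \(\mathcal{B}\), which makes the link to \(I_{3322}\) explicit and turns both the quantum and classical computations into short optimisations of \(\mathcal{B}\). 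The angles you exhibit are a relabelling of the paper's, and both give \(\mathcal{B}=6\).

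One point to tighten: from \(\mathcal{B}\le5\) you may only conclude that \emph{the Table~\ref{tab4} protocol} achieves at most \(17/21\) with shared randomness, not that \(\$_{SR}(\mathcal{M}_3)\le17/21\) over all encoder/decoder pairs. The latter really does require the separate ``finite search over the \(K_6\) structure'' you allude to; in the paper this is exactly the \(\mathcal{T}_1/\mathcal{T}_2\) argument at the end of Appendix~\ref{appendix:d}. Your sentence ``hence \(\$_{SR}(\mathcal{M}_3)\le17/21\)'' should be rerouted through that argument rather than through the local bound on \(\mathcal{B}\). (Your remark that \(\mathcal{B}=6\) is optimal for the two-qubit maximally entangled state is plausible but your one-parameter reduction only optimises over the symmetric ansatz; this is not needed for the Proposition, which asserts achievability of \(6/7\), not optimality.)
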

\noindent Notably, using the channel \( \mathcal{M}_m \), Alice can transmit at least \(\log m\)-bits of information perfectly to Bob if they share the extremal correlation \( \tilde{R}_{m^\prime} \) with \( m^\prime \geq m \). The next theorem shows that this is the optimal Alice can transmit to Bob if they share any $2$-$m$-$2$ NS correlation.
\begin{theorem}\label{theo4}
The one-shot zero-error capacity of any channel \( \mathcal{M} \) with \(\mathrm{C}^\circ(\mathcal{M}) = 0\) assisted with any $2$-$m$-$2$ no-signaling correlation is at most \(\log m\) bits.
\end{theorem}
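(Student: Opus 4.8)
The plan is to reduce the problem to a counting/independence-number bound analogous to the combinatorial characterization of unassisted zero-error capacity, but now applied to the "assisted confusability structure" induced by a $2$-$m$-$2$ no-signaling correlation. Recall that a $2$-$m$-$2$ NS correlation $P=\{p(a,b|x,y)\}$ has $|\mathcal{A}|=|\mathcal{B}|=2$ and $m$ inputs on each side. In the most general NS-assisted protocol for a channel $\mathcal{M}$, Alice encodes a message $\gamma$ into a choice of NS-input $x=x(\gamma)$ together with a (possibly stochastic) map from her outcome $a$ to a channel input $i$; Bob receives the channel output $o$, chooses an NS-input $y=y(o)$, obtains $b$, and outputs a guess $\tilde\gamma(o,b)$. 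Perfect (zero-error) transmission means that for every $\gamma$ and every pair $(a,o,b)$ consistent with the protocol, $\tilde\gamma(o,b)=\gamma$. First I would argue, by the no-signaling condition and convexity, that it suffices to consider protocols where Alice's post-processing of $a$ is deterministic, so each message $\gamma$ is associated with a pair $(x(\gamma),f_\gamma)$ with $f_\gamma:\{0,1\}\to\mathcal{I}$.

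Next I would use the hypothesis $\mathrm{C}^\circ(\mathcal{M})=0$, i.e. $\alpha(\mathbb{G}(\mathcal{M}))=1$, meaning \emph{every} pair of channel inputs is confusable: any two inputs $i\neq i'$ share a common output. The key step is then to show that two distinct messages $\gamma\neq\gamma'$ can be reliably distinguished by the protocol only if their associated NS-inputs differ, $x(\gamma)\neq x(\gamma')$. Indeed, if $x(\gamma)=x(\gamma')=x$, then the marginal $p(a|x)$ is a fixed distribution; pick any $a$ in its support used by both messages (if the supports are disjoint singletons one still has to rule this out — see below), giving channel inputs $i=f_\gamma(a)$ and $i'=f_{\gamma'}(a)$. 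If $i=i'$ the outputs are identically distributed and Bob cannot distinguish; if $i\neq i'$, confusability gives a common output $o$, and then Bob applies the \emph{same} NS-input $y(o)$ in both cases, so his outcome distribution $p(b|y(o))$ and guess are identical — contradiction. The subtlety requiring care is the case where, for a fixed NS-input $x$, the two messages exploit \emph{different} outcomes $a$ because the marginal $p(a|x)$ has been arranged so that — no: since $x$ is the same, $p(a|x)$ is the same distribution and has full support or not independent of $\gamma$; one must check that a deterministic $f_\gamma$ cannot "hide" in a measure-zero branch. I would handle this by noting that zero-error requires correctness on \emph{all} branches of positive probability, and $p(a|x)>0$ for the relevant $a$ is the same for both messages.

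Once it is established that distinct reliably-distinguishable messages require distinct NS-inputs on Alice's side, the number of messages is bounded by the number of available NS-inputs for Alice, which is $m$ in the $2$-$m$-$2$ scenario. Hence at most $m$ messages can be sent, i.e. the assisted zero-error capacity is at most $\log m$ bits. I expect the main obstacle to be the first reduction together with the careful treatment of stochastic encodings/decodings and shared randomness: one must show that allowing Alice and Bob local randomness (and a shared classical variable, which is already subsumed in "$\mathrm{C}^\circ_{SR}=0$") does not help, because a zero-error protocol must succeed on every branch, so one may condition on any fixed value of all local coins and reduce to the deterministic case without loss of generality. A secondary point to verify is that the argument genuinely uses only the $2$-input structure of Bob's side implicitly through $y(o)$ being a \emph{function} of $o$ alone (no-signaling forbids $y$ depending on $a$ or $\gamma$), which is what forces Bob's response to collapse when Alice's two messages land on a common channel output.
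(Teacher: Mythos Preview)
Your proposal is correct, and the argument goes through essentially as you outline (the one imprecision is that Bob's outcome is distributed as $p(b\mid a^*,x,y(o))$ rather than $p(b\mid y(o))$, but since $a^*$ and $x$ coincide for the two messages this changes nothing). However, the route is genuinely different from the paper's. The paper gives a short resource-simulation argument: any $2$-$m$-$2$ NS correlation can be simulated with $\log m$ bits of one-way classical communication plus shared randomness (Alice sends her NS input $x$ through a perfect side channel and both parties locally sample from the one-way-signalling decomposition of the box), so any NS-assisted protocol for $\mathcal{M}$ is reproduced by a protocol for $\mathcal{M}\otimes\mathrm{Id}_{\log m}$; since $\mathrm{C}^\circ(\mathcal{M})=0$ forces the product channel to have one-shot zero-error capacity $\log m$, the bound follows by contradiction. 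Your direct combinatorial argument is more elementary, avoids invoking the simulation cost of NS boxes, and in fact proves a slightly stronger statement --- the bound depends only on the size of Alice's NS-input alphabet, irrespective of the output cardinalities or Bob's input alphabet. The paper's approach, by contrast, is more conceptual and dovetails naturally with its next result (Theorem~\ref{theo5}), which reverses the same simulation reasoning to lower-bound the classical communication cost of the NS box.
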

\begin{proof}
Let's assume there exists an $2$-$m$-$2$ correlation for  \(\mathrm{C}^\circ_{2\text{-}m\text{-}2}(\mathcal{M}) > \log m\). Consider channel  \(\mathcal{M}\) assisted with a \(\log m\) identity channel. By the superadditivity of zero-error capacities, the zero-error capacity of the combined channel is \(\log m\). However, since \(\log m\) bits are sufficient to simulate any $2$-$m$-$2$ correlation, this setup is equivalent to  \( \mathcal{M} \) assisted by the $2$-$m$-$2$  correlation, which by assumption has a zero-error capacity strictly greater than \( \log m \), contradicting our earlier bound.  Thus, no such $2$-$m$-$2$ correlation exists such that \(\mathrm{C}^\circ_{2\text{-}m\text{-}2}(\mathcal{M}) > \log m\).
\end{proof}
So far we observed that channels with  \(\mathrm{C}^\circ(\mathcal{M}) =0\) can sometimes be made useful with the assistance of a no-signaling correlation, say $R_{NS}$, i.e, \(\mathrm{C}^\circ_{R_{NS}}(\mathcal{M}) = \log(m)>0\). We now show that this enhanced capacity $\log(m)$ is also related to the classical communication cost of simulating the non-local correlation, $R_{NS}$.
\begin{theorem}\label{theo5}
If for a channel \(\mathcal{M}\) we have \(\mathrm{C}^\circ(\mathcal{M}) =0\) and  \(\mathrm{C}^\circ_{R_{NS}}(\mathcal{M}) = \log(m)>0\) for some no signaling correlation $R_{NS}$, then \(\log(m)\) bits of classical communication is necessary to simulate $R_{NS}$.
\end{theorem}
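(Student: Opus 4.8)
The plan is to run the argument of Theorem~\ref{theo4} in reverse. Suppose, towards a contradiction, that the no-signaling correlation $R_{NS}$ can be reproduced by a classical protocol using unlimited pre-shared randomness together with only $c$ bits of communication sent from the sender (Alice) to the receiver (Bob), with $c<\log m$. I would take any single-shot scheme witnessing $\mathrm{C}^\circ_{R_{NS}}(\mathcal{M})=\log m$ — one use of $\mathcal{M}$ and one use of $R_{NS}$ transmitting $\log m$ bits without error — and splice the simulation of $R_{NS}$ into it, obtaining a scheme that transmits $\log m$ bits error-free over a single use of the product channel $\mathcal{M}\otimes\mathcal{I}_c$ assisted only by shared randomness, where $\mathcal{I}_c$ denotes a noiseless channel on an alphabet of size $2^{c}$. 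Since shared randomness cannot raise zero-error capacity and $\mathrm{C}^\circ(\mathcal{M})=0$, this would force $\log m=\mathrm{C}^\circ_{R_{NS}}(\mathcal{M})\le\mathrm{C}^\circ_{SR}(\mathcal{M}\otimes\mathcal{I}_c)=\mathrm{C}^\circ(\mathcal{M}\otimes\mathcal{I}_c)=c<\log m$, a contradiction; hence no simulation with fewer than $\log m$ bits can exist.

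More explicitly, the steps are: (i) in the witnessing scheme Alice feeds an input $x$ (depending on her message $\gamma$) to her half of $R_{NS}$, obtains an outcome $a$, and then uses $\mathcal{M}$, while Bob, after receiving $\mathcal{M}$'s output, feeds an input $y$ to his half of $R_{NS}$, obtains $b$, and decodes $\gamma$ from the channel output together with $b$; (ii) replace the box by its assumed simulation, so that Alice now locally computes $a$ and a $c$-bit message from her input and the shared string $\lambda$, transmits that message through $\mathcal{I}_c$, and Bob computes $b$ from his input, $\lambda$, and the received message — since the simulation reproduces $R_{NS}$ exactly, Bob's decoder is still errorless; (iii) note that the confusability graph of $\mathcal{I}_c$ is edgeless on $2^{c}$ vertices, so $\mathbb{G}(\mathcal{M})\boxtimes\mathbb{G}(\mathcal{I}_c)$ has independence number $2^{c}\,\alpha(\mathbb{G}(\mathcal{M}))=2^{c}$, i.e.\ $\mathrm{C}^\circ(\mathcal{M}\otimes\mathcal{I}_c)=c$ (exactly the computation used in the proof of Theorem~\ref{theo4}); (iv) invoke the fact that pre-shared randomness does not change zero-error capacity, $\mathrm{C}^\circ_{SR}=\mathrm{C}^\circ$, to close the chain of inequalities and read off $c\ge\log m$.

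The step I expect to require the most care is the \emph{direction} of the communication used to simulate $R_{NS}$. The substitution in step (ii) is licensed only if the simulation sends its bits from Alice to Bob: in the causal order of the assisted scheme Alice's half of the box is consumed \emph{before} Bob receives any channel output, so no message can flow from Bob back to Alice within a single shot, and one cannot appeal to feedback — and $\mathcal{M}$ may well have positive feedback-assisted zero-error capacity even though $\mathrm{C}^\circ(\mathcal{M})=0$, so this is not a vacuous worry. I would resolve it by reading the statement as a lower bound on the one-way (sender-to-receiver) communication cost of $R_{NS}$, which is exactly the quantity whose sufficiency is the content of Theorem~\ref{theo4} and the operationally relevant notion for channel assistance; one may also recall that every no-signaling box admits a one-way simulation (Alice may always just forward her input, the outcomes being generated from shared randomness), so the one-way cost is a well-defined invariant that the argument then pins down to be at least $\log m$. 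The remaining work — composing the two protocols cleanly and checking that a single use of $R_{NS}$ is replaced by a single use of $\mathcal{I}_c$, so that the one-shot zero-error bound applies verbatim — is routine.
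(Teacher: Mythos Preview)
Your argument is essentially the same as the paper's: assume $R_{NS}$ can be simulated with fewer than $\log m$ bits, splice the simulation into the assisted protocol, and derive a contradiction with the single-shot zero-error capacity of $\mathcal{M}\otimes\mathcal{I}_c$. The paper phrases the last step as a violation of the ``strong product rule'' without further elaboration, whereas you spell out the independence-number computation $\alpha(\mathbb{G}(\mathcal{M})\boxtimes\mathbb{G}(\mathcal{I}_c))=2^{c}$ and the irrelevance of shared randomness; and your explicit discussion of the one-way direction of the simulation is a genuine subtlety that the paper's proof leaves implicit.
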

\begin{proof}
Let's assume that the correlation $R_{NS}$ can be simulated by $\log (m-1)$ bits. This implies the channel \(\mathcal{M}\) assisted with an $\log (m-1)$ identity channel can achieve zero error capacity of $\log(m)$ simply by using $\log (m-1)$ bits to simulate $R_{NS}$. But this violates the strong product rule. Thus  \(\log(m)\) bits of classical communication are necessary to simulate $R_{NS}$.
\end{proof}
\section{Discussion}
\noindent The zero-error capacity of a channel exhibits intriguing behaviors distinct from its standard capacity \cite{Shannon1948}. For instance, as predicted by Shannon \cite{Shannon1956} and  then established by Lovász \cite{Lovasz1979}, the zero-error capacity of noisy classical channels can exhibit superadditivity phenomenon, where the combined usage of channels surpasses the sum of their individual utilities. More recently several other intriguing features of zero-error capacities have been reported by considering noisy quantum channels \cite{Duan2009,Cubitt2011(1),Chen2010}. Additionally, unlike the standard capacity, the zero-error capacity of noisy classical channels can be enhanced through the sharing of nonlocal correlations between the sender and receiver \cite{Cubitt2011}. In this work, we demonstrate that all extremal no-signaling (NS) correlations in the $2$-$2$-$m$ and $2$-$m$-$2$ Bell scenarios exhibit exotic utility in zero-error communication theory. Specifically, these correlations enable the perfect transmission of classical information through noisy classical channels that otherwise have zero zero-error capacity. While nonlocal correlations obtained in quantum theory do not achieve such extreme functionality, we show that they can significantly enhance the success probability of transmitting information through these channels.

\noindent Our findings also open up several avenues for further research. It is known that none of the extremal nonlocal correlations in generic Bell scenarios (with arbitrary but finite input-output cardinalities) are quantum-realizable \cite{Ramanathan2016}. An interesting direction would be to establish whether every such extremal correlation enables communication through noisy classical channels with zero zero-error capacity. On the other hand, characterizing the specific quantum correlations that improve zero-error communication performance, as in Propositions \ref{prop1} \& \ref{prop2}, remains an open question. Finally, experimentally demonstrating the reported quantum advantage in zero-error communication setups would provide a valuable step toward practical applications.\\

\noindent {\bf Acknowledgments:} KA acknowledges support from the CSIR project $09/0575(19300)/2024$-EMR-I. SGN acknowledges support from the CSIR project $09/0575(15951)/2022$-EMR-I. MB acknowledges funding from the National Quantum Mission.

\newpage

\appendix
\section{Stochastic matrix of the channel $\mathcal{N}_m$} \label{appendix:a}
\noindent For generic $m~(\ge 2)$ the channel $\mathcal{N}_m$ looks as:
\begin{figure}[h!]
\centering
\begin{tikzpicture}
\draw[fill=blue!10](-2,0)--(-2,2)--(-1,2)--(-1,0)--(-2,0);
\draw[thin,black](-4.5,1.5)--(-2,1.5);
\draw[thin,black](-4.5,.6)--(-2,.6);
\draw[thin,black](-1,1.5)--(1.5,1.5);
\draw[thin,black](-1,.6)--(1.5,.6);
\draw[blue, fill=black] (-4.5,1.5) circle (.1);
\draw[blue, fill=black] (-4.5,.6) circle (.1);
\draw[blue, fill=black] (1.5,1.5) circle (.1);
\draw[blue, fill=black] (1.5,.6) circle (.1);
\node[] at (-1.5,1){\large{$\mathcal{N}_m$}};
\node[] at (-5.6,1.5){$\{0,1\}\ni i_1$};
\node[] at (-6.45,.6){$\{0,1,\cdots,m-1\}\ni i_2$};
\node[] at (3.4,1.5){$o_1\in\{1,2,\cdots,m+1\}$};
\node[] at (3.4,.6){$o_2\in\{0,1,\cdots,m-1\}$};
\node[] at (3.5,-.5)
{$o_2:=\begin{cases} 
i_1,~~~~~~~~~~~~~~~~~~~\text{if } o_1 = 1;\\
i_2,~~~~~~~~~~~~~~~~~~~\text{if } o_1 = 2;\\
i_1 \oplus_m \Pi_m^{o_1}(i_2),~~ \text{if } o_1 \in \{3, \ldots, m+1\}.
\end{cases}$};
\end{tikzpicture}
\end{figure}\\
The stochastic matrix \(\mathbb{N}_m\equiv\{\text{Pr}(o_1, o_2 | i_1, i_2)\}\) is given by,
\begin{table}[h]
\begin{tabular}
{|c||c|c|c|c|c|c||c|c|c|c|c|c|c|}
\hline
$\text{Output} \backslash \text{Input}$ & ~$(0,0)$~ & ~$(0,1)$~ &~$(0,2)$~&~~~$\cdot$~~~~~&~~~~$\cdot$~~~~~&~$(0,m-1)$~&~$(1,0)$~ & ~$(1,1)$&~$(1,2)$~&~~~$\cdot$~~~~~&~~~~$\cdot$~~~~~&~$(1,m-1)$~ \\ \hline\hline
$\textcolor{purple}{(1,0)}$ & ~$\textcolor{purple}{\Omega_m}$~ & ~$\textcolor{purple}{\Omega_m}$~ &~$\textcolor{purple}{\Omega_m}$~&~~~~$\cdot$~~~~~&~$\cdot$~&~$\textcolor{purple}{\Omega_m}$~ &~$0$~ & ~$0$&~~~~$0$~~~~~&~~~~$\cdot$~~~~~&~~~~$\cdot$~~~~~&~$0$~ \\ \hline
$\textcolor{blue}{(1,1)}$ & ~$0$~ & ~$0$~ &~~~~$0$~~~~~&~$\cdot$~&~~~~$\cdot$~~~~~&~$0$~&~$\textcolor{blue}{\Omega_m}$~ & ~$\textcolor{blue}{\Omega_m}$&~~~~$\textcolor{blue}{\Omega_m}$~~~~~&~~~~$\cdot$~~~~~&~~~~$\cdot$~~~~~&~$\textcolor{blue}{\Omega_m}$~ \\ \hline
$(2,0)$ & $\Omega_m$ & $0$ & ~~~~$0$~~~~~&$\cdot$& $\cdot$& $0$& $\Omega_m$ & $0$&$0$&$\cdot$ & $\cdot$&$0$ \\ \hline
$(2,1)$ & $0$ & $\Omega_m$ & ~~~~$0$~~~~~&$\cdot$& $\cdot$&$0$ &$0$ & $\Omega_m$ & $0$&$\cdot$ &$\cdot$ &$0$ \\ \hline
$(2,2)$ & $0$ & $0$ $\cdot$ &$\Omega_m$ & ~~~~$\cdot$~~~~~&~$\cdot$&$0$ &$0$ & $0$& $\Omega_m$ & $\cdot$ & $\cdot$& $0$\\ \hline
$\cdot$ & $\cdot$ & $\cdot$ & ~~~~$\cdot$~~~~~&$\Omega_m$& $\cdot$& $\cdot$& $\cdot$& $\cdot$&$\cdot$& $\Omega_m$&$\cdot$ &$\cdot$ \\ \hline
$\cdot$ & $\cdot$ & $\cdot$ & ~~~~$\cdot$~~~~~&$\cdot$& $\Omega_m$& $\cdot$& $\cdot$& $\cdot$& $\cdot$&$\cdot$&$\Omega_m$ &$\cdot$ \\ \hline
$(2,m-1)$ & $0$ & $0$ &$0$&$\cdot$ &$\cdot$ & $\Omega_m$ & $0$&$0$ &$0$ & $\cdot$&$\cdot$& $\Omega_m$\\ \hline
$\cdot$ & $\cdot$ & $\cdot$ & $\cdot$& $\cdot$&$\cdot$& $\cdot$& $\cdot$& $\cdot$& $\cdot$&$\cdot$ &$\cdot$&$\cdot$ \\ \hline
$\cdot$ & $\cdot$ & $\cdot$ & $\cdot$&$\cdot$& $\cdot$& $\cdot$& $\cdot$& $\cdot$& $\cdot$&$\cdot$ &$\cdot$&$\cdot$ \\ \hline
$\cdot$ & $\cdot$ & $\cdot$ & $\cdot$&$\cdot$& $\cdot$& $\cdot$& $\cdot$& $\cdot$& $\cdot$&$\cdot$ &$\cdot$&$\cdot$ \\ \hline
$(m+1,0)$ & $\Omega_m$  &$0$  &$0$ &$\cdot$ &$\cdot$ & $0$&$0$&$\Omega_m$ & $0$&$\cdot$& $\cdot$&$0$ \\ \hline
$(m+1,1)$ & $0$ &$0$  & $\Omega_m$&$\cdot$& $\cdot$& $0$& $\Omega_m$&$0$ &$0$ &$\cdot$ &$\cdot$&$0$ \\ \hline
$(m+1,2)$ & $0$ &$0$  &$0$& $\Omega_m$&$\cdot$&$0$ &$0$ &$0$& $\Omega_m$& $\cdot$& $\cdot$&$0$  \\ \hline
$\cdot$ & $\cdot$ & $\cdot$ & $\cdot$&$\cdot$& $\Omega_m$& $\cdot$& $\cdot$& $\cdot$& $\cdot$&$\Omega_m$&$\cdot$ &$\cdot$ \\ \hline
$\cdot$ & $\cdot$ & $\cdot$ & $\cdot$& $\cdot$&$\cdot$& $\Omega_m$& $\cdot$& $\cdot$& $\cdot$&$\cdot$ &$\Omega_m$&$\cdot$ \\ \hline
$(m+1,m-1)$ & $0$ & $\Omega_m$ &$0$ &$\cdot$ &$\cdot$ & $0$&$0$ &$0$ &$0$ &$\cdot$&$\cdot$& $\Omega_m$\\ \hline
\end{tabular}
\caption{The stochastic matrix \(\mathbb{N}_m \equiv \{\text{Pr}(o_1, o_2 | i_1, i_2)\}\) o the channel \(\mathcal{N}_m\), with \(\Omega_m := 1/(m+1)\). All inputs in the set \(\mathcal{I}_l := \{(l, k)\}_{k=0}^{m-1}\) share \((1, l)\) as a common possible outcome for \textcolor{purple}{$l=0$} and \textcolor{blue}{$l=1$}, making them mutually confusable. Conversely, a pair of inputs, one from \textcolor{purple}{$\mathcal{I}_0$} and the other from \textcolor{blue}{$\mathcal{I}_1$}, share exactly one common outcome. Importantly, the common outcome is distinct for each such pair. Thus, for the channel \(\mathcal{N}_m\), the total of \(2 + m^2\) outcomes ensures that the confusability graph \(\mathbb{G}(\mathcal{N}_m)\) forms a complete graph \(K_{2m}\).}\label{tab5}
\end{table}\\

\section{Advantage of Entanglement Assistance in $\mathcal{N}_3$ Channel [Proposition 1]}\label{appendix:b}
\noindent The NS correlation $P_3$ utilized in Theorem 1 reads as
\begin{align}
P_3\equiv\begin{array}{c||c|c|c|c|c|c|c|c|c|}
\hline
x,y\backslash a,b & 0,0 & 0,1& 0,2& 1,0& 1,1& 1,2& 2,0& 2,1& 2,2 \\\hline\hline
0,0& 1/3& 0 & 0 & 0 & 1/3 & 0 & 0 & 0 &1/3\\\hline
0,1& 1/3& 0 & 0 & 0 & 1/3 & 0 & 0 & 0 &1/3\\\hline
1,0& 1/3& 0 & 0 & 0 & 1/3 & 0 & 0 & 0 &1/3\\\hline
1,1& 0& 1/3 & 0 & 0 & 0 & 1/3 & 1/3 & 0 &0\\\hline
\end{array}~.  
\end{align}
\noindent Nonlocality of this correlation can be established by the violation of two-input-three-output bipartite Bell inequality ($\mathbb{I}_3$) proposed by Collins-Gisin-Linden-Massar-Popescu (CGLMP) \cite{Collins2002}. Notably, the correlation $P_3$ saturates the algebraic maximum of the inequality $\mathbb{I}_3$. The authors in \cite{Collins2002} also analyzed the optimal violation of this inequality with the maximally entangled state $\ket{\phi^+}:=\frac{1}{\sqrt{3}}(\ket{00}+\ket{11}+\ket{22})\in\mathbb{C}^3\otimes\mathbb{C}^3$. The resulting correlation reads as: 
\begin{align}\hspace{-1cm}
P^{\phi^+}_3\equiv\begin{array}{c||c|c|c|c|c|c|c|c|c|}
\hline
x,y\backslash a,b & 0,0 & 0,1& 0,2& 1,0& 1,1& 1,2& 2,0& 2,1& 2,2 \\\hline\hline
0,0& \eta\text{C}^2\left(\frac{\pi}{12}\right)& \eta\text{C}^2\left(\frac{5\pi}{12}\right) & \eta\text{C}^2\left(\frac{\pi}{4}\right) & \eta\text{C}^2\left(\frac{\pi}{4}\right) & \eta\text{C}^2\left(\frac{\pi}{12}\right) & \eta\text{C}^2\left(\frac{5\pi}{12}\right) & \eta\text{C}^2\left(\frac{5\pi}{12}\right) & \eta\text{C}^2\left(\frac{\pi}{4}\right) &\eta\text{C}^2\left(\frac{\pi}{12}\right)\\\hline
0,1 & \eta\text{C}^2\left(\frac{\pi}{12}\right)  & \eta\text{C}^2\left(\frac{\pi}{4}\right) & \eta\text{C}^2\left(\frac{5\pi}{12}\right) & \eta\text{C}^2\left(\frac{5\pi}{12}\right) & \eta\text{C}^2\left(\frac{\pi}{12}\right)  & \eta\text{C}^2\left(\frac{\pi}{4}\right)   & \eta\text{C}^2\left(\frac{\pi}{4}\right)   & \eta\text{C}^2\left(\frac{5\pi}{12}\right) &\eta\text{C}^2\left(\frac{\pi}{12}\right)\\\hline
1,0 & \eta\text{C}^2\left(\frac{\pi}{12}\right)  & \eta\text{C}^2\left(\frac{\pi}{4}\right)   & \eta\text{C}^2\left(\frac{5\pi}{12}\right) & \eta\text{C}^2\left(\frac{5\pi}{12}\right) & \eta\text{C}^2\left(\frac{\pi}{12}\right)  & \eta\text{C}^2\left(\frac{\pi}{4}\right)   & \eta\text{C}^2\left(\frac{\pi}{4}\right)   & \eta\text{C}^2\left(\frac{5\pi}{12}\right) &\eta\text{C}^2\left(\frac{\pi}{12}\right)\\\hline
1,1 & \eta\text{C}^2\left(\frac{\pi}{4}\right)   & \eta\text{C}^2\left(\frac{\pi}{12}\right)  & \eta\text{C}^2\left(\frac{5\pi}{12}\right) & \eta\text{C}^2\left(\frac{5\pi}{12}\right) & \eta\text{C}^2\left(\frac{\pi}{4}\right)   & \eta\text{C}^2\left(\frac{\pi}{12}\right)  & \eta\text{C}^2\left(\frac{\pi}{12}\right)  & \eta\text{C}^2\left(\frac{5\pi}{12}\right) & \eta\text{C}^2\left(\frac{\pi}{4}\right)\\\hline
\end{array}   \label{mec}
\end{align}
where $\eta=1/54$ and $\text{C}(x)\equiv\text{Cosec}(x)$.\\\\
While sending the information $\gamma\in\{0,1\}$ through the channel $\mathcal{N}_3$ in assistance with a generic NS correlation by employing the strategy of Table 1 in the main manuscript, Bob's guess $\tilde{\gamma}$ belongs to the set $\{0,1,2\}$. For the case $\tilde{\gamma}=2$, Bob can randomly modify his guess as $0$ and $1$. Alternatively, since $\gamma$ is uniformly random over $\{0,1\}$, Bob can choose to answer $\tilde{\gamma}$ as $0$ whenever $\tilde{\gamma}=2$. Denoting $x_{\oplus_{2}}:= x ~\text{mod}~ 2$, the success probability of the protocol reads as  
\begin{subequations}
\begin{align}
\$_{NS}(\mathcal{N}_3)&=p(\tilde{\gamma} = \gamma ,\gamma)
=\sum_{\gamma} p(\gamma) p(\tilde{\gamma} |\gamma)\delta_{\gamma, \tilde{\gamma}_{\oplus_{2}}}
=\sum_{\gamma} \frac{1}{2} p(\tilde{\gamma}|\gamma)\delta_{\gamma, \tilde{\gamma}_{\oplus_{2}}}\nonumber\\
&=\sum_{\gamma, o_1} \frac{1}{2} p(\tilde{\gamma}, o_1 |\gamma)\delta_{\gamma, \tilde{\gamma}_{\oplus_{2}}}
=\sum_{\gamma, o_1} \frac{1}{2} p(o_1| \gamma) p(\tilde{\gamma}|\gamma, o_1)\delta_{\gamma, \tilde{\gamma}_{\oplus_{2}}}\nonumber\\
&=\frac{1}{8}\sum_{\gamma, o_1}p(\tilde{\gamma}|\gamma, o_1)\delta_{\gamma, \tilde{\gamma}_{\oplus_{2}}}  \\
&=\frac{1}{8}\sum_{\gamma, o_1}\sum_{y, i_2, o_2}p(\tilde{\gamma}, y, i_2, o_2 |\gamma,o_1)\delta_{\gamma, \tilde{\gamma}_{\oplus_{2}}}\nonumber\\
&=\frac{1}{8}\sum_{\gamma, o_1}\sum_{y, i_2, o_2}p(o_2|\gamma,o_1,\tilde{\gamma}, y, i_2 )p(\tilde{\gamma}, y, i_2 |\gamma,o_1)\delta_{\gamma, \tilde{\gamma}_{\oplus_{2}}}\nonumber\\
&=\frac{1}{8}\sum_{\gamma, o_1}\sum_{y, i_2, o_2}p(o_2|\gamma,o_1,\tilde{\gamma}, y, i_2 )p(\tilde{\gamma}, i_2 |\gamma,o_1, y)p(y| \gamma, o_1)\delta_{\gamma, \tilde{\gamma}_{\oplus_{2}}}~.\label{ns2}
\end{align}
\end{subequations}
We now analyze the different output cases, i.e. $o_1\in\{1,2,3,4\}$ separately:
\begin{itemize}
\item[] {\bf Case-I:} $o_1=1$. In this case we have $o_2=\gamma$. Thus for any NS correlation, we have
\begin{align}
p(\tilde{\gamma}|\gamma, o_1=1)\delta_{\gamma, \tilde{\gamma}_{\oplus_{2}}} = 1. \label{b1}
\end{align}
\item[] {\bf Case-II:} $o_1=2$. Here, we have $\tilde{\gamma} = (\hat{\Pi}_3(o_2)\oplus_3 b),~~ o_2 = i_2,~ p(y =0| \gamma, o_1=2) =0, ~\text{and}~ p(y =1| \gamma, o_1=2) =1$, which thus result in
\begin{align}
p(\tilde{\gamma}|\gamma, o_1=2)\delta_{\gamma, \tilde{\gamma}_{\oplus_{2}}} 
&= \sum_{i_2, o_2}p(o_2|\gamma,\tilde{\gamma}, i_2, o_1=2, y=1 )p(\tilde{\gamma}, i_2 |\gamma,o_1=2, y=1)\delta_{\gamma, \tilde{\gamma}_{\oplus_2}}\nonumber\\
&= \sum_{i_2, o_2} \delta_{o_2, i_2}\delta_{\gamma, (\Pi_3(o_2)\oplus_3 b)_{\oplus_2}}p(\Pi_3(o_2)\oplus_3 b, i_2 |\gamma,o_1=2, y=1)\nonumber\\
&= \sum_{i_2}\delta_{\gamma, (\Pi_3(i_2)\oplus_3 b)_{\oplus_2}}p(\Pi_3(i_2)\oplus_3 b, i_2 |\gamma,o_1=2, y=1)\nonumber\\
&= \sum_{i_2}\delta_{\gamma, (\Pi_3(i_2)\oplus_3 b)_{\oplus_2}}p(b, i_2 |\gamma,o_1=2, y=1)\nonumber\\
&= \sum_{i_2}\delta_{\gamma, (\Pi_3(i_2)\oplus_3 b)_{\oplus_2}}p(b, i_2 |\gamma, y=1)\nonumber\\
&= \sum_{a}\delta_{\gamma, (\Pi_3(a)\oplus_3 b)_{\oplus_2}}p(a, b |\gamma, y=1).\label{b2}
\end{align}
\item[] {\bf Case-III:} $o_1 = 3$. In this case, $\tilde{\gamma} = (o_2 \oplus_3 \Pi_3(b)),~~ o_2 = i_2\oplus_3 \gamma,~ p(y =0| \gamma, o_1=3) =1, ~\text{and}~ p(y =1| \gamma, o_1=3) =0$. thus we have 
\begin{align}
p(\tilde{\gamma}|\gamma, o_1=3)\delta_{\gamma, \tilde{\gamma}_{\oplus_{2}}} 
&= \sum_{i_2, o_2}p(o_2|\gamma,\tilde{\gamma}, i_2, o_1=3, y=0 )p(\tilde{\gamma}, i_2 |\gamma,o_1=3, y=0)\delta_{\gamma, \tilde{\gamma}_{\oplus_2}}\nonumber\\
&= \sum_{i_2, o_2} \delta_{o_2, i_2\oplus_3 \gamma}\delta_{\gamma, (o_2\oplus_3 \Pi_3(b))_{\oplus_2}}p(o_2\oplus_3 \Pi_3(b), i_2 |\gamma,o_1=3, y=0)\nonumber\\
&= \sum_{i_2} \delta_{\gamma, ( i_2\oplus_3 \gamma\oplus_3 \Pi_3(b))_{\oplus_2}}p(i_2\oplus_3 \gamma\oplus_3 \Pi_3(b), i_2 |\gamma,o_1=3, y=0)\nonumber\\
&= \sum_{i_2}\delta_{\gamma, (i_2\oplus_3 \gamma\oplus_3 \Pi_3(b))_{\oplus_2}}p(b, i_2 |\gamma,o_1=3, y=0)\nonumber\\
&= \sum_{i_2}\delta_{\gamma, (i_2\oplus_3 \gamma\oplus_3 \Pi_3(b))_{\oplus_2}}p(b, i_2 |\gamma, y=0)\nonumber\\
&= \sum_{a}\delta_{\gamma, (a\oplus_3 \gamma\oplus_3 \Pi_3(b))_{\oplus_2}}p(a, b |\gamma, y=0). \label{b3}
\end{align}
\item[] {\bf Case-IV:} $o_1 = 4$. Here, $\tilde{\gamma} = (o_2 \oplus_3 b),~~ o_2 = i_2\oplus_3 \gamma,~ p(y =0| \gamma, o_1=4) =1, ~\text{and}~ p(y =1| \gamma, o_1=4) =0$, implying
\begin{align}
p(\tilde{\gamma}|\gamma, o_1=4)\delta_{\gamma, \tilde{\gamma}_{\oplus_{2}}} 
=& \sum_{i_2, o_2}p(o_2|\gamma,\tilde{\gamma}, i_2, o_1=4, y=0 )p(\tilde{\gamma}, i_2 |\gamma,o_1=4, y=0)\delta_{\gamma, \tilde{\gamma}_{\oplus_2}}\nonumber\\
=& \sum_{i_2, o_2} \delta_{o_2, \Pi(i_2)\oplus_3 \gamma}\delta_{\gamma, (o_2\oplus_3 b)_{\oplus_2}}p(o_2\oplus_3 b, i_2 |\gamma,o_1=4, y=0)\nonumber\\\nonumber
=& \sum_{i_2} \delta_{\gamma, ( \Pi(i_2)\oplus_3 \gamma\oplus_3 b)_{\oplus_2}}p(\Pi(i_2)\oplus_3 \gamma\oplus_3 b, i_2 |\gamma,o_1=4, y=0)\\\nonumber
=& \sum_{i_2}\delta_{\gamma, (\Pi(i_2)\oplus_3 \gamma\oplus_3 b)_{\oplus_2}}p(b, i_2 |\gamma,o_1=4, y=0)\\
=& \sum_{i_2}\delta_{\gamma, (\Pi(i_2)\oplus_3 \gamma\oplus_3 b)_{\oplus_2}}p(b, i_2 |\gamma, y=0)\nonumber\\
=&\sum_{a}\delta_{\gamma, (\Pi(a)\oplus_3 \gamma\oplus_3 b)_{\oplus_2}}p(a, b |\gamma, y=0).\label{b4}
\end{align}
\end{itemize}
Substituting Eqs.(\ref{b1}-\ref{b4}) in Eq.(\ref{ns2}), for the correlation in Eq.(\ref{mec}), we obtain
\begin{align}
\$_{\phi^+}(\mathcal{N}_3)
=& \frac{1}{8}\sum_{\gamma}\left[p(\tilde{\gamma}|\gamma, o_1=1)\delta_{\gamma, \tilde{\gamma}_{\oplus_{2}}} + p(\tilde{\gamma}|\gamma, o_1=2)\delta_{\gamma, \tilde{\gamma}_{\oplus_{2}}} +p(\tilde{\gamma}|\gamma, o_1=3)\delta_{\gamma, \tilde{\gamma}_{\oplus_{2}}} \right.\nonumber\\
&\left.\hspace{3cm}+p(\tilde{\gamma}|\gamma, o_1=4)\delta_{\gamma, \tilde{\gamma}_{\oplus_{2}}}\right]\nonumber\\ 
=& \frac{1}{8}\left[2 +\sum_{\gamma, a}\delta_{\gamma, (\Pi_3(a)\oplus_3 b)_{\oplus_2}}p(a, b |\gamma, y=1)+\delta_{\gamma, (a\oplus_3 \gamma\oplus_3 \Pi_3(b))_{\oplus_2}}p(a, b |\gamma, y=0)\right.\nonumber\\
&\left.\hspace{4cm}+\delta_{\gamma, (\Pi(a)\oplus_3 \gamma\oplus_3 b)_{\oplus_2}}p(a, b |\gamma, y=0)\right]\nonumber\\
=& \frac{1}{4}\left[1+\frac{1}{36} \text{C}^2\left(\frac{\pi}{4}\right)+\frac{1}{18} \text{C}^2\left(\frac{5\pi}{12}\right)+ \frac{1}{6} \text{C}^2\left(\frac{\pi}{12}\right)\right]\approx~0.9008~~.
\end{align}
\section{Advantage of $2$-$m$-$2$ extremal nonlocal correlations in zero-error communication}\label{appendix:c}
\noindent {\bf $m_1m_222$ Bell scenario.--} In this case we have \( |\mathcal{X}| = m_1 \geq 2 \), \( |\mathcal{Y}| = m_2 \geq 2 \), and \( |\mathcal{A}| = |\mathcal{B}| = 2 \), i.e., $x\in \{0,1,\cdots, m_1-1\}$, $y\in\{0,1,\cdots,m_2-1\}$ and outputs $a,b \in \{0,1\}$. Let us denote the extremal nonlocal correlations as \(\mathcal{R}_{m_1 m_2}\). This set inherently includes the extended versions of the extreme correlations of lower input dimensions, \(\mathcal{R}_{m_1' m_2'}\), where \(m_1' < m_1\) and \(m_2' < m_2\). For such correlations, individual parties produce constant outputs for all \(m_1 > m_1'\) and \(m_2 > m_2'\). The extremal correlations $\mathcal{\tilde{R}}_{m_1m_2}\equiv \{r_{m_1m_2}(a, b | x, y)\}$ which are not extended versions of lower input correlations are characterized by Jones {\it et.al.} \cite{Jones2005} in the following form
\begin{align}
&r_{m_1m_2}(a, b | x, y)=\frac{1}{2},~~\text{if}~~a \oplus_2 b = \delta_{x,1}\delta_{y,1} + \sum_{(i,j)\in Q}\delta_{x,i}\delta_{y,j}\\
&\forall~Q\subseteq\{1,2,\cdots, m_1-1\}\times \{1,2,\cdots,m_2-1\}-\{(1,1)\}.\nonumber
\end{align}
In this work, we will only consider a particular class of the extremal NS correlations \( \mathcal{\tilde{R}}_m \equiv \{r_m(a, b | x, y)\} \) given by: \\
\begin{align}
r_m(a, b | x, y) := 
\begin{cases} 
\frac{1}{2}, ~~\text{if } a \oplus_2 b = \sum_{i=1}^{m-1} \delta_{q,i} \cdot \delta_{\alpha,i}~,~\text{with}~m:=\min\{m_1,m_2\}; \\     
0,~~\text{otherwise}.
\end{cases}
\end{align}
{\bf The channel $\mathcal{M}_{m}$.--} For $m\ge2$, the channel's inputs are the tuple $(i_{1},i_{2}) \in \{0,1,\cdots,m-1\}\times\{0,1\}$, while outputs are the tuple $(o_{1},o_{2})\in \{1,2,\cdots,m(m-1)+1\}\times\{0,1,\cdots, m-1\}$ (see Fig.\ref{fig5}). For any input, all the possible outcomes of $o_{1}$ are uniformly at random, and the outcome $o_2$ depends on $o_1$ in the following manner:
\begin{figure}[h!]
\centering
\begin{tikzpicture}
\draw[fill=purple!10](-2,0)--(-2,2)--(-1,2)--(-1,0)--(-2,0);
\draw[thin,black](-3,1.5)--(-2,1.5);
\draw[thin,black](-3,.6)--(-2,.6);
\draw[thin,black](-1,1.5)--(0,1.5);
\draw[thin,black](-1,.6)--(0,.6);
\draw[blue, fill=black] (-3,1.5) circle (.1);
\draw[blue, fill=black] (-3,.6) circle (.1);
\draw[blue, fill=black] (0,1.5) circle (.1);
\draw[blue, fill=black] (0,.6) circle (.1);
\node[] at (-1.5,1){\large{$\mathcal{M}_m$}};
\node[] at (-4.9,1.5){$\{0,1,\cdots,m-1\}\ni i_1$};
\node[] at (-4.,.6){$\{0,1\}\ni i_2$};
\node[] at (2.5,1.5){$o_1\in\{1,2,\cdots, m(m-1)+1\}$};
\node[] at (1.9,.6){$o_2\in\{0,1,\cdots,m-1\}$};
\end{tikzpicture}
\caption{Diagrammatical illustration of the channel $\mathcal{M}_m$. For $m\ge2$, $(i_{1},i_{2}) \in \{0,1,\cdots,m-1\}\times\{0,1\}$, whereas $(o_{1},o_{2})\in \{1,2,\cdots,m(m-1)+1\}\times\{0,1,\cdots, m-1\}$.}\label{fig5}
~\vspace{-.5cm}
\end{figure}
\begin{align}
o_2:=\begin{cases}
i_1,~&\text{if } o_1 = 1;\\
i_1 \oplus_{3} \Pi^{o_1-2}_{m}(i_2),~&\text{if } o_1 = 2;\\
~~~~\vdots~&~~~~\vdots\\
i_1 \oplus_{3} \Pi^{o_1-2}_{m}(i_2),~&\text{if } o_1 = m;\\
i_1 \oplus_{3} \Pi^{o_1-(m+1)}_{m}(i_2 \oplus_2 \delta_{i_{1},1}),~&\text{if } o_1 = m+1;\\
~~~~\vdots~&~~~~\vdots\\
i_1 \oplus_{3} \Pi^{o_1-(m+1)}_{m}(i_2 \oplus_2 \delta_{i_{1},1}),~&\text{if } o_1 = 2m-1;\\
~~~~\vdots~&~~~~\vdots\\
~~~~\vdots~&~~~~\vdots\\
i_1 \oplus_{3} \Pi^{o_1-((m-1)^2+2)}_{m}(i_2 \oplus_2 \delta_{i_{1},m-1}),~&\text{if } o_1 = (m-1)^2+2;\\
~~~~\vdots~&~~~~\vdots\\
i_1 \oplus_{3} \Pi^{o_1-((m-1)^2+2)}_{m}(i_2 \oplus_2 \delta_{i_{1},m-1}),~&\text{if } o_1 = m(m-1)+1;
\end{cases}
\end{align}
where $\Pi^{o_1-k}_{3}(0):=0$ and for $i_2\neq 0,~\Pi_3^{o_1-k}(i_2):= [(i_{2}-1)\oplus_{2}(o_{1}-k)]+1$. For the channel $\mathcal{M}_m$,  $o_1$ takes values from the set $\{1,2,\cdots,m(m-1)+1\}$. For $o_1=1$, we define the set
\begin{align}
\mathcal{K}:=\{(o_1,o_2)~|~o_1=1~\&~o_2\in\{0,1,\cdots,m-1\}\}.
\end{align}
We partition the remaining outputs of the channel into $m$ sets $\mathcal{S}^j$, with $j\in\{0,1,\cdots,m-1\}$, each containing $m(m-1)$ outputs and defined as
\begin{align}
\mathcal{S}^j&:=\left\{(o_1,o_2)~|~o_1\in\{(m-1)j+2,(m-1)j+3,\cdots,(m-1)j+m\}\right.~\nonumber\\
&\hspace{5cm}\left.\&~o_2\in\{0,1,\cdots,m-1\}\right\}.
\end{align}
The channel $\mathcal{M}_m$ has the following features:
\begin{itemize}
\item[(P0)] Each of the input $(i_1,i_2)\in\{0,1,\cdots,m-1\}\times\{0,1\}$ results in a subset of outputs $\mathcal{K}_{i_1,i_2}:=\{(o_1,o_2)~|~o_1=1~\&~o_2=i_1\}$ in $\mathcal{K}$. Notably, 
\begin{subequations}
\begin{align}
&\bigcup_{i_1}\mathcal{K}_{(i_1,i_2)}=\mathcal{K}~,\text{for}~i_2=0,1;\\
&\mathcal{K}_{(i_1,i_2)}\bigcap\mathcal{K}_{(i'_1,i'_2)}=\emptyset~,\text{whenever}~i_1\neq i'_1~;\\
&\mathcal{K}_{(i_1,0)}\bigcap\mathcal{K}_{(i_1,1)}\neq\emptyset.
\end{align}
\end{subequations}  
\item[(P1)] For all $i_i\in\{0,1,\cdots,m-1\}$, the input $(i_1,i_2=0)$ results in a subset of outputs $\mathcal{S}^0_{(i_1,0)}\equiv\{(o_1,o_2=i_1)~|~o_1\in\{2,3,\cdots,m\}\}$ in $\mathcal{S}^0$. Similarly, input $(i_1,i_2=1)$ results in a subset of outputs $\mathcal{S}^0_{(i_1,1)}\equiv\{(o_1,o_2=i_1\oplus_m\Pi_m^{o_1-2}(i_2))~|~o_1\in\{2,3,\cdots,m\}$ in $\mathcal{S}^0$. Notably, 
\begin{subequations}
\begin{align}
&\bigcup_{i_1}\mathcal{S}^0_{(i_1,i_2)}=\mathcal{S}^0~,\text{for}~i_2=0,1;\\
&\mathcal{S}^0_{(i_1,i_2)}\bigcap\mathcal{S}^0_{(i'_1,i_2)}=\emptyset~,\text{whenever}~i_1\neq i'_1~;\\
&\mathcal{S}^0_{(i_1,0)}\bigcap\mathcal{S}^0_{(i'_1,1)}\begin{cases}
=\emptyset~,\text{whenever}~i_1=i'_1~.\\
\neq\emptyset~,\text{whenever}~i_1\neq i'_1~.
\end{cases}
\end{align}
\end{subequations}
\item[(P2)] Consider the case $j\neq 0$, then for all $i_i\in\{0,1,\cdots,m-1\}$, the input $(i_1,i_2)$ results in one of the following subsets of outputs:
\begin{subequations}
\begin{align}
&\mathcal{S}^{j}_{(i_1,0)}\equiv\{(o_1,o_2)~|~o_1\in\{(m-1)j+2,\cdots,(m-1)j+m\} ~\&~ o_2 = i_1\} \subseteq \mathcal{S}^j;\\
&\mathcal{S}^j_{(j,0)}\equiv\{(o_1,o_2)~|~o_1\in\{(m-1)j+2,\cdots,(m-1)j+m\} \nonumber\\
&\hspace{5cm}~\&~ o_2=i_1\oplus_m\Pi_m^{o_1-((m-1)j+2)}(1)\} \subseteq \mathcal{S}^j;\\
&\mathcal{S}^j_{(i_1,1)}\equiv\{(o_1,o_2)~|~o_1\in\{(m-1)j+2,\cdots,(m-1)j+m\} \nonumber\\
&\hspace{5cm}~\&~ o_2 = i_1\oplus_m\Pi_m^{o_1-((m-1)j+2)}(1)\} \subseteq \mathcal{S}^j;\\
&\mathcal{S}^{j}_{(j,1)}\equiv\{(o_1,o_2)~|~o_1\in\{(m-1)j+2,\cdots,(m-1)j+m\} ~\&~ o_2 = j\} \subseteq \mathcal{S}^j.
\end{align}
\end{subequations}
Notably, 
\begin{subequations}
\begin{align}
&\mathcal{S}^j_{(i_1\neq j,i_2)}\bigcap\mathcal{S}^j_{(j,i_2)}\neq\emptyset~,\forall~ i_1, i_2, j;\label{eqp2}
\end{align}    
\end{subequations}
\end{itemize}
All the above properties can be explicitly verified for the simple case of $m=3$. The details of the channel $\mathcal{M}_m$ are provided in Table \ref{tab6}.\\
\begin{table}[h!]
\centering
\begin{tabular}{c||c|c|c|c|c|c|}
\hline
$\text{Output} \backslash \text{Input}$~~~ &~~~ $(0,0)$~~~ &~~~ $(0,1)$~~~ & ~~~$(1,0)$~~~ &~~~ $(1,1)$~~~ &~~~ $(2,0)$~~~ & ~~~$(2,1)$~~~ \\ \hline\hline
$(1,0)$ & $1/7$ & $1/7$ & $0$   & $0$   & $0$   & $0$   \\\hline
$(1,1)$ & $0$   & $0$   & $1/7$ & $1/7$ & $0$   & $0$   \\\hline
$(1,2)$ & $0$   & $0$   & $0$   & $0$   & $1/7$ & $1/7 $\\\hline
\rowcolor{green!20}$(2,0)$ & $1/7$ & $0$   & $0$   & $0$   & $0$   & $1/7$ \\\hline
\rowcolor{yellow!20}$(2,1)$ & $0$   & $1/7$ & $1/7$ & $0$   & $0$   & $0$   \\\hline
$(2,2)$ & $0$   & $0$   & $0$   & $1/7$ & $1/7$ & $0$   \\\hline
\rowcolor{blue!20}$(3,0)$ & $1/7$ & $0$   & $0$   & $1/7$ & $0$   & $0$   \\\hline
$(3,1)$ & $0$   & $0$   & $1/7$ & $0$   & $0$   & $1/7$ \\\hline
\rowcolor{green!50!blue!50}$(3,2)$ & $0$   & $1/7$ & $0$   & $0$   & $1/7$ & $0$   \\\hline
\rowcolor{green!20}$(4,0)$ & $1/7$   & $0$ & $0$   & $0$   & $0$ & $1/7$   \\\hline
$(4,1)$ & $0$   & $1/7$ & $0$  & $1/7$ & $0$   & $0$  \\\hline
\rowcolor{purple!20}$(4,2)$ & $0$   & $0$  & $1/7$ &$0$   & $1/7$ & $0$   \\\hline
$(5,0)$ & $1/7$ & $0$   & $1/7$ & $0$   & $0$  & $0$   \\\hline
\rowcolor{red!30!blue!40}$(5,1)$ & $0$   & $0$   & $0$   & $1/7$ & $0$   & $1/7$ \\\hline
\rowcolor{green!50!blue!50}$(5,2)$ &$0$  & $1/7$ & $0$   & $0$  & $1/7$ &$0$   \\\hline
$(6,0)$ & $1/7$ & $0$   & $0$   & $0$   & $1/7$ &$0$   \\\hline
\rowcolor{yellow!20}$(6,1)$ & $0$   & $1/7$ & $1/7$ & $0$   & $0$   & $0$   \\\hline
\rowcolor{red!30!blue!40}$(6,2)$ & $0$  & $0$   & $0$   & $1/7$ & $0$ & $1/7$   \\\hline
\rowcolor{blue!20}$(7,0)$ & $1/7$ & $0$   & $0$   & $1/7$ & $0$   & $0$   \\\hline
\rowcolor{purple!20}$(7,1)$ & $0$   & $0$   & $1/7$ & $0$   &  $1/7$ & $0$  \\\hline
$(7,2)$ & $0$   & $1/7$ & $0$   & $0$   & $0$   & $1/7$ \\\hline
\end{tabular}
\caption{The stochastic matrix \(\mathbb{M}_3 \equiv \{\text{Pr}(o_1,o_2|i_1,i_2)\}\) represents the channel \(\mathcal{M}_3\), where each output induces confusability between exactly two inputs. For example, the output \((1,0)\) is associated with the inputs \((0,0)\) and \((0,1)\), causing confusability. In some cases, input pairs are confused at more than one output. For instance, the inputs \((0,0)\) and \((2,1)\) are confusable for the outputs \((2,0)\) and \((4,0)\). Such instances are highlighted using distinct colors for clarity.}
\label{tab6}
\end{table}\\
The total number of edges in a fully connected graph with $2m$ nodes is $^{2m}C_{2} = m(2m-1)$.
\begin{itemize}
\item[(C0)] Property (P0) leads to confusability between all input pairs $(i_1, 0)$ and $(i_1,1)$ and gives $m$ unique edges of the graph.
\item[(C1)] Outputs in $S^0$ confuse inputs pairs $(i_1, 0)$ with input pairs $(i_1',1)$ for all $i_1\neq i_1'$ and results in $m(m-1)$ number of edges.
\item[(C2)] C0 and C1, confuses input pairs $(i_1, 0)$ and $(i_1',1)$ for all $i_1$ and $i_1'$. Eq.(\ref{eqp2}), for a particular value of $j$, confuses input pairs $(j,i_2)$ and $(i_1 \neq j, i_2)$ for all $i_2$. Since this is true for all $j$, all $(i_1, i_2)$ become confusable with $(i_1'\neq i_1, i_2)$ for all $i_2$.
\end{itemize}
This shows that any input of the channel is confusable with any other and hence
the confusability graph $\mathbb{G}(\mathcal{M}_{m})$ is fully connected, which further implies the $C^0_{SR}(\mathcal{M}_{m}) = 0$.

\section{Advantage of Entanglement Assistance in $\mathcal{M}_3$ Channel [Proposition 2]}\label{appendix:d}
\noindent The NS correlation $R_3$ utilized in Theorem 3 reads as
\begin{align}
\tilde{R}_3\equiv\begin{array}{c||c|c|c|c|c|c|c|c|c|}
\hline
a,b\backslash x,y & 0,0 & 0,1& 0,2& 1,0& 1,1& 1,2& 2,0& 2,1& 2,2 \\\hline\hline
0,0& 1/2& 1/2 & 1/2 & 1/2 & 0 & 1/2 & 1/2 & 1/2 &0\\\hline
0,1& 0& 0 & 0 & 0 & 1/2 & 0 & 0 & 0 &1/2\\\hline
1,0& 0& 0 & 0 & 0 & 1/2 & 0 & 0 & 0 &1/2\\\hline
1,1& 1/2& 1/2 & 1/2 & 1/2 & 0 & 1/2 & 1/2 & 1/2 &0\\\hline
\end{array}~.  
\end{align}
Nonlocality of this correlation is established by the violation of relabeling of three-input-two-output Bell inequality $(\mathbb{I}_{3322})$ proposed by Collins and Gisin \cite{Collins2004}. The correlation $\tilde{R}_3$ violates this $\mathbb{I}_{3322}$ inequality to its algebraic maximum. The authors in \cite{Collins2004} have also analyzed the optimal violation of this inequality by two-qubit maximally entangled state $\ket{\psi^-}:=\frac{1}{\sqrt{2}}(\ket{01}-\ket{10})\in\mathbb{C}^2\otimes\mathbb{C}^2$. The measurement choice of individual parties lies in the $X-Z$ plane of the Bloch sphere and denoted by the single angle they make with the $Z$ axis, $A_0=0, A_1=\frac{\pi}{3}, A_2=\frac{2\pi}{3}, B_0=\frac{4\pi}{3}, B_1=\frac{2\pi}{3}, B_2=\pi$. The correlation obtained from these measurements is given by
\begin{align}
R^{\psi^-}_3\equiv\begin{array}{c||c|c|c|c|c|c|c|c|c|}
\hline
a,b\backslash x,y & 0,0 & 0,1& 0,2& 1,0& 1,1& 1,2& 2,0& 2,1& 2,2 \\\hline\hline
0,0& 3/8& 3/8 & 1/2 & 1/2 & 1/8 & 3/8 & 3/8 & 1/2 &1/8\\\hline
0,1& 1/8& 1/8 & 0 & 0 & 3/8 & 1/8 & 1/8 & 0 &3/8\\\hline
1,0& 1/8& 1/8 & 0 & 0 & 3/8 & 1/8 & 1/8 & 0 &3/8\\\hline
1,1& 3/8& 3/8 & 1/2 & 1/2 & 1/8 & 3/8 & 3/8 & 1/2 &1/8\\\hline
\end{array}~\label{ent}.
\end{align}\\
While sending the information $\gamma\in\{0,1,2\}$ through the channel $\mathcal{M}_3$ in assistance with a generic NS correlation by employing the strategy of Table \ref{tab4}, Bob's guess $\tilde{\gamma}$ belongs to the set $\{0,1,2\}$. Since $\gamma$ is uniformly random over $\{0,1,2\}$, Bob can choose to answer $\tilde{\gamma}$ as $\gamma$ . Denoting $x_{\oplus_{2}}:= x ~\text{mod}~ 2$, the success probability of the protocol reads as,\\
\begin{subequations}
\begin{align}
\$_{NS}(\mathcal{M}_3)&=p(\tilde{\gamma} = \gamma ,\gamma)
=\sum_{\gamma} p(\gamma) p(\tilde{\gamma} |\gamma)\delta_{\gamma, \tilde{\gamma}}
=\sum_{\gamma} \frac{1}{3} p(\tilde{\gamma}|\gamma)\delta_{\gamma, \tilde{\gamma}}\nonumber\\
&=\sum_{\gamma, o_1} \frac{1}{3} p(\tilde{\gamma}, o_1 |\gamma)\delta_{\gamma, \tilde{\gamma}}
=\sum_{\gamma, o_1} \frac{1}{3} p(o_1| \gamma) p(\tilde{\gamma}|\gamma, o_1)\delta_{\gamma, \tilde{\gamma}}\nonumber\\
&=\frac{1}{21}\sum_{\gamma, o_1}p(\tilde{\gamma}|\gamma, o_1)\delta_{\gamma, \tilde{\gamma}}  \\
&=\frac{1}{21}\sum_{\gamma, o_1}\sum_{y, i_2, o_2}p(\tilde{\gamma}, y, i_2, o_2 |\gamma,o_1)\delta_{\gamma, \tilde{\gamma}}=\frac{1}{8}\sum_{\gamma, o_1}\sum_{y, i_2, o_2}p(o_2|\gamma,o_1,\tilde{\gamma}, y, i_2 )p(\tilde{\gamma}, y, i_2 |\gamma,o_1)\delta_{\gamma, \tilde{\gamma}}\nonumber\\
&=\frac{1}{21}\sum_{\gamma, o_1}\sum_{y, i_2, o_2}p(o_2|\gamma,o_1,\tilde{\gamma}, y, i_2 )p(\tilde{\gamma}, i_2 |\gamma,o_1, y)p(y| \gamma, o_1)\delta_{\gamma, \tilde{\gamma}}~.\label{ns3}
\end{align}
\end{subequations}
We now analyze the different output cases, i.e. $o_1\in\{1,2,3,4,5,6,7\}$ separately:
\begin{itemize}
\item[] {\bf Case-I:} $o_1=1$. In this case, we have $o_2=\gamma$. Thus for any NS correlation, we have
\begin{align}
p(\tilde{\gamma}|\gamma, o_1=1)\delta_{\gamma, \tilde{\gamma}} = 1. \label{d1}
\end{align}
\item[] {\bf Case-II:} $o_1=2$. Here, we have $~o_2 = \gamma\oplus_3\Pi^{0}_{3}(i_2),~\tilde{\gamma} = (\hat{\Pi}_3(\Pi^{0}_{3}(b))\oplus_3 o_2),~ p(y =0| \gamma, o_1=2) =1, $\\$ p(y =1| \gamma, o_1=2) =0 ~\text{and}~,p(y =2| \gamma, o_1=2) =0$, which thus result in
\begin{align}
&p(\tilde{\gamma}|\gamma, o_1=2)\delta_{\gamma, \tilde{\gamma}} 
= \sum_{i_2, o_2}p(o_2|\gamma,\tilde{\gamma}, i_2, o_1=2, y=0 )p(\tilde{\gamma}, i_2 |\gamma,o_1=2, y=0)\delta_{\gamma, \tilde{\gamma}}\nonumber\\
&= \sum_{i_2, o_2} \delta_{o_2, (\gamma\oplus_3\Pi^{0}_{3}(i_2))}\delta_{\gamma, (\hat{\Pi}_3(\Pi^{0}_{3}(b))\oplus_3 o_2)}p\left(\hat{\Pi}_3(\Pi^{0}_{3}(b))\oplus_3 (\gamma\oplus_3\Pi^{0}_{3}(i_2)), i_2 |\gamma,o_1=2, y=0\right)\nonumber\\
&= \sum_{i_2}\delta_{\gamma, \left(\hat{\Pi}_3\left(\Pi^{0}_{3}(b)\right)\oplus_3 \left(\gamma\oplus_3\Pi^{0}_{3}(i_2)\right)\right)}p\left(\hat{\Pi}_3(\Pi^{0}_{3}(b))\oplus_3 (\gamma\oplus_3\Pi^{0}_{3}(i_2)), i_2 |\gamma,o_1=2, y=0\right)\nonumber\\
&= \sum_{i_2}\delta_{\gamma, \left(\hat{\Pi}_3\left(\Pi^{0}_{3}(b)\right)\oplus_3 \left(\gamma\oplus_3\Pi^{0}_{3}(i_2)\right)\right)}p(b, i_2 |\gamma,o_1=2, y=0)\nonumber\\
&= \sum_{i_2}\delta_{\gamma, \left(\hat{\Pi}_3\left(\Pi^{0}_{3}(b)\right)\oplus_3 \left(\gamma\oplus_3\Pi^{0}_{3}(i_2)\right)\right)}p(b, i_2 |\gamma, y=0)\nonumber\\
&= \sum_{a}\delta_{\gamma, \left(\hat{\Pi}_3\left(\Pi^{0}_{3}(b)\right)\oplus_3 \left(\gamma\oplus_3\Pi^{0}_{3}(a)\right)\right)}p(a, b |\gamma, y=0).\label{d2}
\end{align}
\item[] {\bf Case-III:} $o_1 = 3$. In this case, $o_2 = \gamma\oplus_3\Pi^{1}_{3}(i_2),~~\tilde{\gamma} = (\hat{\Pi}_3(\Pi^{1}_{3}(b))\oplus_3 o_2),~ p(y =0| \gamma, o_1=3) =1, $\\$ p(y =1| \gamma, o_1=2) =0 ~\text{and}~,p(y =2| \gamma, o_1=2) =0$. thus we have 
\begin{align}
&p(\tilde{\gamma}|\gamma, o_1=3)\delta_{\gamma, \tilde{\gamma}} 
= \sum_{i_2, o_2}p(o_2|\gamma,\tilde{\gamma}, i_2, o_1=3, y=0 )p(\tilde{\gamma}, i_2 |\gamma,o_1=3, y=0)\delta_{\gamma, \tilde{\gamma}}\nonumber\\
&= \sum_{i_2, o_2} \delta_{o_2, \gamma\oplus_3\Pi^{1}_{3}(i_2)}\delta_{\gamma, (\hat{\Pi}_3(\Pi^{1}_{3}(b))\oplus_3 o_2)}p(\hat{\Pi}_3(\Pi^{1}_{3}(b))\oplus_3 o_2, i_2 |\gamma,o_1=3, y=0)\nonumber\\
&= \sum_{i_2} \delta_{\gamma, (\hat{\Pi}_3(\Pi^{1}_{3}(b))\oplus_3 \gamma\oplus_3\Pi^{1}_{3}(i_2))}p((\hat{\Pi}_3(\Pi^{1}_{3}(b))\oplus_3 \gamma\oplus_3\Pi^{1}_{3}(i_2)), i_2 |\gamma,o_1=3, y=0)\nonumber\\
&= \sum_{i_2}\delta_{\gamma, (\hat{\Pi}_3(\Pi^{1}_{3}(b))\oplus_3 \gamma\oplus_3\Pi^{1}_{3}(i_2))}p(b, i_2 |\gamma,o_1=3, y=0)\nonumber\\
&= \sum_{i_2}\delta_{\gamma, (\hat{\Pi}_3(\Pi^{1}_{3}(b))\oplus_3 \gamma\oplus_3\Pi^{1}_{3}(i_2))}p(b, i_2 |\gamma, y=0)\nonumber\\
&= \sum_{a}\delta_{\gamma, (\hat{\Pi}_3(\Pi^{1}_{3}(b))\oplus_3 \gamma\oplus_3\Pi^{1}_{3}(a))}p(a, b |\gamma, y=0). \label{d3}
\end{align}
\item[] {\bf Case-IV:} $o_1 = 4$. Here, $o_2 = \gamma\oplus_3\Pi^{0}_{3}(i_2\oplus_2 \delta_{\gamma,1}) ,~~\tilde{\gamma} = \left(o_2 \oplus_3 \hat{\Pi_{3}}(\Pi^{0}_{3}(b))\right),~~ p(y =0| \gamma, o_1=4) =0,$\\$ p(y =1| \gamma, o_1=4) =1~\text{and}~,p(y =2| \gamma, o_1=4) =0$, implying
\begin{align}
&p(\tilde{\gamma}|\gamma, o_1=4)\delta_{\gamma, \tilde{\gamma}_{\oplus_{2}}} 
= \sum_{i_2, o_2}p(o_2|\gamma,\tilde{\gamma}, i_2, o_1=4, y=0 )p(\tilde{\gamma}, i_2 |\gamma,o_1=4, y=0)\delta_{\gamma, \tilde{\gamma}}\nonumber\\
=& \sum_{i_2, o_2} \delta_{o_2, \left(\gamma\oplus_3\Pi^{0}_{3}(i_2\oplus_2 \delta_{\gamma,1})\right)}\delta_{\gamma, \left(o_2 \oplus_3 \hat{\Pi_{3}}(\Pi^{0}_{3}(b))\right)}p(o_2 \oplus_3 \hat{\Pi_{3}}(\Pi^{0}_{3}(b)), i_2 |\gamma,o_1=4, y=1)\nonumber\\\nonumber
=& \sum_{i_2} \delta_{\gamma, \left(\gamma\oplus_3\Pi^{0}_{3}(i_2\oplus_2 \delta_{\gamma,1}) \oplus_3 \hat{\Pi_{3}}(\Pi^{0}_{3}(b))\right)}p(\gamma\oplus_3\Pi^{0}_{3}(i_2\oplus_2 \delta_{\gamma,1}) \oplus_3 \hat{\Pi_{3}}(\Pi^{0}_{3}(b)), i_2 |\gamma,o_1=4, y=1)\\\nonumber
=& \sum_{i_2}\delta_{\gamma, \left(\gamma\oplus_3\Pi^{0}_{3}(i_2\oplus_2 \delta_{\gamma,1}) \oplus_3 \hat{\Pi_{3}}(\Pi^{0}_{3}(b))\right)}p(b, i_2 |\gamma,o_1=4, y=1)\\
=& \sum_{i_2}\delta_{\gamma, \left(\gamma\oplus_3\Pi^{0}_{3}(i_2\oplus_2 \delta_{\gamma,1}) \oplus_3 \hat{\Pi_{3}}(\Pi^{0}_{3}(b))\right)}p(b, i_2 |\gamma, y=1)\nonumber\\
=&\sum_{a}\delta_{\gamma, \left(\gamma\oplus_3\Pi^{0}_{3}(a\oplus_2 \delta_{\gamma,1}) \oplus_3 \hat{\Pi_{3}}(\Pi^{0}_{3}(b))\right)}p(a, b |\gamma, y=1).\label{d4}
\end{align}
\item[] {\bf Case-V:} $o_1 = 5$. Here, $o_2 = \gamma\oplus_3\Pi^{1}_{3}(i_2\oplus_2 \delta_{\gamma,1}) ,~~\tilde{\gamma} = \left(o_2 \oplus_3 \hat{\Pi_{3}}(\Pi^{1}_{3}(b))\right),~~ p(y =0| \gamma, o_1=5) =0,$\\$ p(y =1| \gamma, o_1=5) =1~\text{and}~,p(y =2| \gamma, o_1=5) =0$, implying
\begin{align}
&p(\tilde{\gamma}|\gamma, o_1=5)\delta_{\gamma, \tilde{\gamma}_{\oplus_{2}}} 
= \sum_{i_2, o_2}p(o_2|\gamma,\tilde{\gamma}, i_2, o_1=5, y=0 )p(\tilde{\gamma}, i_2 |\gamma,o_1=5, y=0)\delta_{\gamma, \tilde{\gamma}}\nonumber\\
=& \sum_{i_2, o_2} \delta_{o_2, \left(\gamma\oplus_3\Pi^{1}_{3}(i_2\oplus_2 \delta_{\gamma,1})\right)}\delta_{\gamma, \left(o_2 \oplus_3 \hat{\Pi_{3}}(\Pi^{1}_{3}(b))\right)}p(o_2 \oplus_3 \hat{\Pi_{3}}(\Pi^{1}_{3}(b)), i_2 |\gamma,o_1=5, y=1)\nonumber\\\nonumber
=& \sum_{i_2} \delta_{\gamma, \left(\gamma\oplus_3\Pi^{1}_{3}(i_2\oplus_2 \delta_{\gamma,1}) \oplus_3 \hat{\Pi_{3}}(\Pi^{1}_{3}(b))\right)}p(\gamma\oplus_3\Pi^{1}_{3}(i_2\oplus_2 \delta_{\gamma,1}) \oplus_3 \hat{\Pi_{3}}(\Pi^{1}_{3}(b)), i_2 |\gamma,o_1=5, y=1)\\\nonumber
=& \sum_{i_2}\delta_{\gamma, \left(\gamma\oplus_3\Pi^{1}_{3}(i_2\oplus_2 \delta_{\gamma,1}) \oplus_3 \hat{\Pi_{3}}(\Pi^{1}_{3}(b))\right)}p(b, i_2 |\gamma,o_1=5, y=1)\\
=& \sum_{i_2}\delta_{\gamma, \left(\gamma\oplus_3\Pi^{1}_{3}(i_2\oplus_2 \delta_{\gamma,1}) \oplus_3 \hat{\Pi_{3}}(\Pi^{1}_{3}(b))\right)}p(b, i_2 |\gamma, y=1)\nonumber\\
=&\sum_{a}\delta_{\gamma, \left(\gamma\oplus_3\Pi^{1}_{3}(a\oplus_2 \delta_{\gamma,1}) \oplus_3 \hat{\Pi_{3}}(\Pi^{1}_{3}(b))\right)}p(a, b |\gamma, y=1).\label{d5}
\end{align}
\item[] {\bf Case-VI:} $o_1 = 6$. Here, $o_2 = \gamma\oplus_3\Pi^{0}_{3}(i_2\oplus_2 \delta_{\gamma,2}) ,~~\tilde{\gamma} = \left(o_2 \oplus_3 \hat{\Pi_{3}}(\Pi^{0}_{3}(b))\right),~~ p(y =0| \gamma, o_1=6) =0, $\\$ p(y =1| \gamma, o_1=6) =0~\text{and}~,p(y =2| \gamma, o_1=6) =1$, implying
\begin{align}
&p(\tilde{\gamma}|\gamma, o_1=6)\delta_{\gamma, \tilde{\gamma}_{\oplus_{2}}} 
= \sum_{i_2, o_2}p(o_2|\gamma,\tilde{\gamma}, i_2, o_1=6, y=1 )p(\tilde{\gamma}, i_2 |\gamma,o_1=6, y=0)\delta_{\gamma, \tilde{\gamma}}\nonumber\\
=& \sum_{i_2, o_2} \delta_{o_2, \left(\gamma\oplus_3\Pi^{0}_{3}(i_2\oplus_2 \delta_{\gamma,2})\right)}\delta_{\gamma, \left(o_2 \oplus_3 \hat{\Pi_{3}}(\Pi^{0}_{3}(b))\right)}p(o_2 \oplus_3 \hat{\Pi_{3}}(\Pi^{0}_{3}(b)), i_2 |\gamma,o_1=6, y=1)\nonumber\\\nonumber
=& \sum_{i_2} \delta_{\gamma, \left(\gamma\oplus_3\Pi^{0}_{3}(i_2\oplus_2 \delta_{\gamma,2}) \oplus_3 \hat{\Pi_{3}}(\Pi^{0}_{3}(b))\right)}p(\gamma\oplus_3\Pi^{0}_{3}(i_2\oplus_2 \delta_{\gamma,2}) \oplus_3 \hat{\Pi_{3}}(\Pi^{0}_{3}(b)), i_2 |\gamma,o_1=6, y=1)\\\nonumber
=& \sum_{i_2}\delta_{\gamma, \left(\gamma\oplus_3\Pi^{0}_{3}(i_2\oplus_2 \delta_{\gamma,2}) \oplus_3 \hat{\Pi_{3}}(\Pi^{0}_{3}(b))\right)}p(b, i_2 |\gamma,o_1=6, y=1)\\
=& \sum_{i_2}\delta_{\gamma, \left(\gamma\oplus_3\Pi^{0}_{3}(i_2\oplus_2 \delta_{\gamma,2}) \oplus_3 \hat{\Pi_{3}}(\Pi^{0}_{3}(b))\right)}p(b, i_2 |\gamma, y=1)\nonumber\\
=&\sum_{a}\delta_{\gamma, \left(\gamma\oplus_3\Pi^{0}_{3}(a\oplus_2 \delta_{\gamma,2}) \oplus_3 \hat{\Pi_{3}}(\Pi^{0}_{3}(b))\right)}p(a, b |\gamma, y=1).\label{d6}
\end{align}
\item[] {\bf Case-VII:} $o_1 = 7$. Here, $o_2 = \gamma\oplus_3\Pi^{1}_{3}(i_2\oplus_2 \delta_{\gamma,2}) ,~~\tilde{\gamma} = \left(o_2 \oplus_3 \hat{\Pi_{3}}(\Pi^{1}_{3}(b))\right),~~ p(y =0| \gamma, o_1=7) =0, $\\$ p(y =1| \gamma, o_1=7) =0~\text{and}~,p(y =2| \gamma, o_1=7) =1$  implying
\begin{align}
&p(\tilde{\gamma}|\gamma, o_1=7)\delta_{\gamma, \tilde{\gamma}_{\oplus_{2}}} 
= \sum_{i_2, o_2}p(o_2|\gamma,\tilde{\gamma}, i_2, o_1=7, y=0 )p(\tilde{\gamma}, i_2 |\gamma,o_1=7, y=0)\delta_{\gamma, \tilde{\gamma}}\nonumber\\
=& \sum_{i_2, o_2} \delta_{o_2, \left(\gamma\oplus_3\Pi^{1}_{3}(i_2\oplus_2 \delta_{\gamma,2})\right)}\delta_{\gamma, \left(o_2 \oplus_3 \hat{\Pi_{3}}(\Pi^{1}_{3}(b))\right)}p(o_2 \oplus_3 \hat{\Pi_{3}}(\Pi^{1}_{3}(b)), i_2 |\gamma,o_1=7, y=1)\nonumber\\\nonumber
=& \sum_{i_2} \delta_{\gamma, \left(\gamma\oplus_3\Pi^{1}_{3}(i_2\oplus_2 \delta_{\gamma,2}) \oplus_3 \hat{\Pi_{3}}(\Pi^{1}_{3}(b))\right)}p(\gamma\oplus_3\Pi^{1}_{3}(i_2\oplus_2 \delta_{\gamma,2}) \oplus_3 \hat{\Pi_{3}}(\Pi^{1}_{3}(b)), i_2 |\gamma,o_1=7, y=1)\\\nonumber
=& \sum_{i_2}\delta_{\gamma, \left(\gamma\oplus_3\Pi^{1}_{3}(i_2\oplus_2 \delta_{\gamma,2}) \oplus_3 \hat{\Pi_{3}}(\Pi^{1}_{3}(b))\right)}p(b, i_2 |\gamma,o_1=7, y=1)\\
=& \sum_{i_2}\delta_{\gamma, \left(\gamma\oplus_3\Pi^{1}_{3}(i_2\oplus_2 \delta_{\gamma,2}) \oplus_3 \hat{\Pi_{3}}(\Pi^{1}_{3}(b))\right)}p(b, i_2 |\gamma, y=1)\nonumber\\
=&\sum_{a}\delta_{\gamma, \left(\gamma\oplus_3\Pi^{1}_{3}(a\oplus_2 \delta_{\gamma,2}) \oplus_3 \hat{\Pi_{3}}(\Pi^{1}_{3}(b))\right)}p(a, b |\gamma, y=1).\label{d7}
\end{align}
\end{itemize}
Substituting Eqs.(\ref{d1}-\ref{d7}) in Eq.(\ref{ns3}), for the correlation in Eq.(\ref{ent}), we obtain
\begin{align}
&\$_{\psi^-}(\mathcal{M}_3)
= \frac{1}{21}\sum_{\gamma}\left[p(\tilde{\gamma}|\gamma, o_1=1)\delta_{\gamma, \tilde{\gamma}} + 
                                  p(\tilde{\gamma}|\gamma, o_1=2)\delta_{\gamma, \tilde{\gamma}} +
                                  p(\tilde{\gamma}|\gamma, o_1=3)\delta_{\gamma, \tilde{\gamma}}  \right.\nonumber\\
                                  &\left.\hspace{1cm}+
                                  p(\tilde{\gamma}|\gamma, o_1=4)\delta_{\gamma, \tilde{\gamma}}+
                                  p(\tilde{\gamma}|\gamma, o_1=5)\delta_{\gamma, \tilde{\gamma}} +
                                  p(\tilde{\gamma}|\gamma, o_1=6)\delta_{\gamma, \tilde{\gamma}} +
                                  p(\tilde{\gamma}|\gamma, o_1=7)\delta_{\gamma, \tilde{\gamma}}\right]\nonumber\\ 
=& \frac{1}{21}\left[3 +
\sum_{\gamma, a}
\delta_{\gamma, \left(\hat{\Pi}_3\left(\Pi^{0}_{3}(b)\right)\oplus_3 \left(\gamma\oplus_3\Pi^{0}_{3}(a)\right)\right)}p(a, b |\gamma, y=0)+
\delta_{\gamma, (\hat{\Pi}_3(\Pi^{1}_{3}(b))\oplus_3 \gamma\oplus_3\Pi^{1}_{3}(a))}p(a, b |\gamma, y=0)\right.\nonumber\\
&\left.\hspace{.2cm}+\delta_{\gamma, \left(\gamma\oplus_3\Pi^{0}_{3}(a\oplus_2 \delta_{\gamma,1}) \oplus_3 \hat{\Pi_{3}}(\Pi^{0}_{3}(b))\right)}p(a, b |\gamma, y=1)+
\delta_{\gamma, \left(\gamma\oplus_3\Pi^{1}_{3}(a\oplus_2 \delta_{\gamma,1}) \oplus_3 \hat{\Pi_{3}}(\Pi^{1}_{3}(b))\right)}p(a, b |\gamma, y=1)\right.\nonumber\\
&\left.\hspace{.2cm}+ \delta_{\gamma, \left(\gamma\oplus_3\Pi^{0}_{3}(a\oplus_2 \delta_{\gamma,2}) \oplus_3 \hat{\Pi_{3}}(\Pi^{0}_{3}(b))\right)}p(a, b |\gamma, y=1)+
\delta_{\gamma, \left(\gamma\oplus_3\Pi^{1}_{3}(a\oplus_2 \delta_{\gamma,2}) \oplus_3 \hat{\Pi_{3}}(\Pi^{1}_{3}(b))\right)}p(a, b |\gamma, y=1)
\right]\nonumber\\
=& \frac{1}{7}\left[1+\frac{3}{8}\times8
                     +\frac{1}{2}\times4\right]=\frac{6}{7}~~.
\end{align}
\textbf{Classical Optimal Success Probability}\\
Here, we show the optimal success probability of sending three symbols $\gamma\in \{0,1,2\}$ through $\mathcal{M}_3$ using classical resources. The inputs $\{(i_1, i_2)\}$ can be divided into two disjoint sets, $\mathcal{T}_1 = \{(0,0), (1,1), (2,1)\}$ and $\mathcal{T}_2 = \{(0,1), (1,0), (2,0)\}$ where, each pair of inputs belonging to the set $\mathcal{T}_i$ are confusable by two outputs, for all $i$. There are two choices for encoding the three symbols. Either, all three symbols are encoded to the inputs belonging to same set $\mathcal{T}_i$ or any two are encoded in $\mathcal{T}_i$ and the remaining one in the set $\mathcal{T}_{i'}$ where $i' \neq i$. Let Bob's guess be $\tilde{\gamma}$, then the success probability is:\\
\begin{align}
\$_{Cl}(\mathcal{M}_3)&=\sum_{\gamma}p(\tilde{\gamma} = \gamma ,\gamma)
=\sum_{\gamma} p(\gamma) p(\tilde{\gamma} |\gamma)\delta_{\gamma, \tilde{\gamma}}=\frac{1}{3}\sum_{\gamma} p(\tilde{\gamma} |\gamma)\delta_{\gamma, \tilde{\gamma}}\nonumber\\
&=\frac{1}{3}\sum_{\gamma,o_1,o_2} p(\tilde{\gamma},o_1,o_2 |\gamma)\delta_{\gamma, \tilde{\gamma}}\nonumber\\
&=\frac{1}{3}\sum_{\gamma,o_1,o_2} p(\tilde{\gamma} |\gamma,o_1,o_2)p(o_1,o_2 |\gamma)\delta_{\gamma, \tilde{\gamma}}~.
\end{align}
If for an input pair, there are two confusing outcomes then any deterministic decoding of outcomes will consistently prefer one input (symbol) over the other, leading to incorrect decoding for at least one of the inputs for both outcomes. Consider the following two encoding cases:
\begin{itemize}
\item {\bf Case-I:} The classical messages $\gamma\in\{0,1,2\}$ are encoded in inputs from the same set \(\mathcal{T}_i\). Each pair of inputs will result in two confusing outcomes. This implies that there will be at least six outcomes for which there is always incorrect decoding for at least one of the inputs from each pair. And hence, out of the $21$ possible outcomes corresponding to the three inputs, $9$ are unique and not confusing while the remaining $12$ are not all unique but rather a repetition of $6$ unique ones. Then the success probability maximized over all decodings will be $\frac{15}{21}$ for this encoding.

\item {\bf Case-II:} Two of the messages are encoded in inputs from $\mathcal{T}_i$ and the other one in an input from the set $\mathcal{T}_{i'}$. The pair belonging to $\mathcal{T}_i$ will be confused by two outcomes and the other two pairs with a single outcome. This implies that there will be at least four outcomes for which the decoding is incorrect for at least one of the inputs from each pair. Hence, out of the $21$ possible outcomes corresponding to the three inputs, $13$ unique outcomes are not confusing, which leads to perfect decoding. The remaining $8$ outcomes are not all unique, but rather a repetition of $4$ unique ones leading to confusion among the pairs of inputs. Then the success probability maximized over all decodings will be $\frac{17}{21}$ for this encoding.

\begin{align}
\$^{max}_{Cl}(\mathcal{M}_3)&=  \frac{17}{21} < \frac{18}{21} =\frac{6}{7}~.
\end{align}
\end{itemize}

%

\end{document}